\newtheorem{thm}{Theorem}
\newtheorem{lem}{Lemma}
\newtheorem{defn}{Definition}
\newtheorem{exmp}{Example}
\newtheorem{rem}{Remark}
\newtheorem{obs}{Observation}
\begin{document}
%
% paper title
% Titles are generally capitalized except for words such as a, an, and, as,
% at, but, by, for, in, nor, of, on, or, the, to and up, which are usually
% not capitalized unless they are the first or last word of the title.
% Linebreaks \\ can be used within to get better formatting as desired.
% Do not put math or special symbols in the title.
\title{A Multi-Feature Diffusion Model: Rumor Blocking in Social Networks}
%
%
% author names and IEEE memberships
% note positions of commas and nonbreaking spaces ( ~ ) LaTeX will not break
% a structure at a ~ so this keeps an author's name from being broken across
% two lines.
% use \thanks{} to gain access to the first footnote area
% a separate \thanks must be used for each paragraph as LaTeX2e's \thanks
% was not built to handle multiple paragraphs
%

\author{Jianxiong Guo,
		Tiantian Chen,
	Weili Wu,~\IEEEmembership{Member,~IEEE,}
\thanks{J. Guo T. Chen and W. Wu are with the Department
	of Computer Science, Erik Jonsson School of Engineering and Computer Science, Univerity of Texas at Dallas, Richardson, TX, 75080 USA
	
	E-mail: jianxiong.guo@utdallas.edu}% <-this 
\thanks{Manuscript received April 19, 2005; revised August 26, 2015.}}

% note the % following the last \IEEEmembership and also \thanks - 
% these prevent an unwanted space from occurring between the last author name
% and the end of the author line. i.e., if you had this:
% 
% \author{....lastname \thanks{...} \thanks{...} }
%                     ^------------^------------^----Do not want these spaces!
%
% a space would be appended to the last name and could cause every name on that
% line to be shifted left slightly. This is one of those "LaTeX things". For
% instance, "\textbf{A} \textbf{B}" will typeset as "A B" not "AB". To get
% "AB" then you have to do: "\textbf{A}\textbf{B}"
% \thanks is no different in this regard, so shield the last } of each \thanks
% that ends a line with a % and do not let a space in before the next \thanks.
% Spaces after \IEEEmembership other than the last one are OK (and needed) as
% you are supposed to have spaces between the names. For what it is worth,
% this is a minor point as most people would not even notice if the said evil
% space somehow managed to creep in.

% The paper headers
\markboth{Journal of \LaTeX\ Class Files,~Vol.~14, No.~8, August~2015}%
{Shell \MakeLowercase{\textit{et al.}}: Bare Demo of IEEEtran.cls for IEEE Journals}
% The only time the second header will appear is for the odd numbered pages
% after the title page when using the twoside option.
% 
% *** Note that you probably will NOT want to include the author's ***
% *** name in the headers of peer review papers.                   ***
% You can use \ifCLASSOPTIONpeerreview for conditional compilation here if
% you desire.

% If you want to put a publisher's ID mark on the page you can do it like
% this:
%\IEEEpubid{0000--0000/00\$00.00~\copyright~2015 IEEE}
% Remember, if you use this you must call \IEEEpubidadjcol in the second
% column for its text to clear the IEEEpubid mark.

% use for special paper notices
%\IEEEspecialpapernotice{(Invited Paper)}

% make the title area
\maketitle

% As a general rule, do not put math, special symbols or citations
% in the abstract or keywords.
\begin{abstract}
Online social networks provide a convenient platform for the spread of rumors, which could lead to serious aftermaths such as economic losses and public panic. The classical rumor blocking problem aims to launch a set of nodes as a positive cascade to compete with misinformation in order to limit the spread of rumors. However, most of the related researches were based on one-dimensional diffusion model. In reality, there are more than one feature associated with an object. The user's impression on this object is determined not just by one feature but by his/her overall evaluation on all of these features. Thus, the influence spread of this object can be decomposed into the spread of multiple features. Based on that, we propose a Multi-Feature diffusion model (MF-model) in this paper, and a novel problem, Multi-Feature Rumor Blocking (MFRB), is formulated on a multi-layer network structure according to this model. To solve MFRB, we design a creative sampling method, called Multi-Sampling, which can be applied to a multi-layer network structure. Inspired by martingale analysis, the Revised-IMM algorithm is proposed, and returns a satisfactory approximate solution to MFRB. Finally, we evaluate our proposed algorithm by conducting experiments on real datasets, and show the effectiveness and accuracy of the Revised-IMM algorithm and significantly outperforms other baseline algorithms.
\end{abstract}

% Note that keywords are not normally used for peerreview papers.
\begin{IEEEkeywords}
Multi-Feature Diffusion, Rumor Blocking, Social Networks, Sampling, Approximation Algorithm, Martingale
\end{IEEEkeywords}

% For peer review papers, you can put extra information on the cover
% page as needed:
% \ifCLASSOPTIONpeerreview
% \begin{center} \bfseries EDICS Category: 3-BBND \end{center}
% \fi
%
% For peerreview papers, this IEEEtran command inserts a page break and
% creates the second title. It will be ignored for other modes.
\IEEEpeerreviewmaketitle

\section{Introduction}
% The very first letter is a 2 line initial drop letter followed
% by the rest of the first word in caps.
% 
% form to use if the first word consists of a single letter:
% \IEEEPARstart{A}{demo} file is ....
% 
% form to use if you need the single drop letter followed by
% normal text (unknown if ever used by the IEEE):
% \IEEEPARstart{A}{}demo file is ....
% 
% Some journals put the first two words in caps:
% \IEEEPARstart{T}{his demo} file is ....
% 
% Here we have the typical use of a "T" for an initial drop letter
% and "HIS" in caps to complete the first word.
\IEEEPARstart{T}{he} online social platform, such as Facebook, Twitter, LinkedIn and WeChat, have been growing rapidly over the last years, and has been a major communication platform. There are more than 1.52 billion users active daily on Facebook and 321 million users active monthly on Twitter. Usually, these social platforms can be represented as online social networks (OSNs), which is a directed graph, including individuals and their relationship. Even that providing users with convenient information exchange, OSNs provide opportunities for rumor, namely false or negative information, to spread as well. It can cause something bad happening and even panic. For example, in 2013, the fake news "President Obama is attacked" spread in Twitter caused the US stock falling wildly. Then, in 2016, the rumor made by competitors that "Hillary Clinton dumped weapons to ISIS" spread in Facebook damaged her reputation in presidential election \cite{allcott2017social}. In 2018, a video spread in Weibo that a bus fell down into river from a bridge because of a car, leading to there are 15 people losing their lives. In Weibo, All the comments were unanimously pointed to that this tragic tragedy was caused by the driving driving mistakes of the car driver. However, after investigation by the police, the disaster was brought by a dispute between the bus driver and an unreasonable passenger. Thus, the car driver was acquitted immediately.

The influence in social networks is diffused from user to user, which can be initiated by a set of seed (initial) users. The notable study of influence diffusion is traced back to Kempe et al. \cite{kempe2003maximizing} where influence maximization (IM) problem was formulated as a combinatorial optimization problem: find a subset of users as the seed set that makes the follow-up adoptions maximized by spreading the influence. They proposed two information diffusion models that are accepted by most researchers in subsequent researches: Independent Cascade model (IC-model) and Linear Threshold model (LT-model), and proved IM is NP-hard and its objective function is monotone submodular under these two models, thus, a good approximation can be obtained by natural greedy algorithm \cite{nemhauser1978analysis}. When opposite points of view, negative and positive information, from different cascades are spread at the same time on the same social network, users are more inclined to accept the information arriving on them first. Therefore, one solution of blocking rumor spread is to launch a positive cascade to compete with misinformation \cite{budak2011limiting} \cite{tong2017efficient}. Since the budget for positive seeds is limited, a classical rumor blocking (RB) problem is formulated, which aims to spread a positive cascade by selecting a positive seed set to prevent the spread of misinformation as much as possible.

The existing researches, regardless the problem about IM or RB, were based on the simple IC-model or LT-model. In other words, a piece of information that propagates through the network has only a boolean state, either good or bad. However, in the real world, the actual information diffusion is much more complicated. Let us look at an example first.
\begin{exmp}
For a computer, the features associated with this computer are price, performance, appearance and brand. Whether a user will purchase this computer is determined by his overall evaluation on these features, for example, price is high or low, performance is good or bad and so on.
\end{exmp}
\noindent
Therefore, in this paper, we propose a multi-feature diffusion model (MF-model), which matches to the realistic scenario better. For a user, the quality of a product depends on his/her evaluation on the features associated with this product. Information diffusion is not simple one-dimensional, object by object, but multi-dimensional, feature by feature. Back to the above example, someone want to promote this computer, he does not tell others directly that this computer is very good, but tell others that its price is low, performance is satisfactory and so on. In our MF-model, we assume that each feature can be diffused individually. After the diffusion of each feature terminates, users can determine whether this product is good or bad according to their own evaluation criteria. The importance of each feature is different for different users, which strengthens the generalization of model.

Then, we propose a Multi-Feature Rumor Blocking (MFRB) problem, which selects a positive seed set to compete with rumor cascade under the MF-model. The rumor from competitor is possible to spread wrong information on different features in order to lower down the reputation of the product. For example, somebody says the battery performance of iPhone is not good and its price is too expensive, or some presidential candidate's private life is extravagant. It is worth noting that although there is some negative news, this does not mean iPhone is not a good product or this presidential candidate is not qualified. The judgement for an object depends on the comprehensive evaluation on all features associated with it. Therefore, our MF-model is suitable to solve such problems. The influence spread in MF-model can be constructed in a multi-layer network structure. We prove the objective function of MFRB problem is monotone non-decreasing and submodular. Unfortunately, computing the exact influence is \#P-hard \cite{chen2010scalable}, thus the objective function is hard to compute despite Greedy algorithm is simple and effective. To estimate the expected influence, they adopt the Monte-Carlo simulation usually, but the computational cost is not acceptable. In order to improve its efficiency, the randomized algorithms based on reverse influence sampling (RIS) popularized gradually \cite{borgs2014maximizing} \cite{tang2014influence} \cite{tang2015influence}. Inspired by this idea, we propose a novel and effective sampling method, called Multi-Sampling, which can be applied to the multi-layer network structure, and we show that this sampling method is effecitve to solve our MFRB problem. Then, based on Multi-Sampling and martingale analysis, the Revised-IMM (Influence Maximization via Martingales) is formulated, whose performance for MFRB problem is as good as Greedy algorithm but much more efficient than Greedy algorithm. Besides, We can implement Revised-IMM algorithm under the different parameter settings according to your requirements for error and running time. Our contributions in this paper are summarized as follows:

\begin{enumerate}
	\item This is the first attempt to study multi-feature diffusion problem. By learning some real application scenarios, we propose MF-model to simulate multi-feature diffusion. Then, we show that MF-model can be constructed on a multi-layer network structure.
	\item MFRB problem is formulated based on MF-model, and we prove its objective function is monotone non-decreasing and submodular.
	\item We design a novel sampling method, Multi-Sampling, which can be applied to multi-layer network structure. Based on Multi-Sampling and martingale analysis, the Revised-IMM is formulated, which returns a $(1-1/e-\varepsilon)$-approximate solution of MFRB problem, and runs in $O((k+\ell)mr\log n/\varepsilon^2)$ expected time.
	\item Our proposed algorithms are evaluated on real-world datasets. The results show Revised-IMM is much faster than Greedy algorithm and almost get the same performance for MFRB problem.
\end{enumerate}

\textbf{Organiztion:} In Section \uppercase\expandafter{\romannumeral2}, we survey the related works about RB and its algorithms. We then present MF-model and MFRB problem in Section \uppercase\expandafter{\romannumeral3}, introduce our sampling technique on multi-layer network in Section \uppercase\expandafter{\romannumeral4}, and design our randomized algorithms in Section  \uppercase\expandafter{\romannumeral5}. Finally, we conduct experiments and conclude in Section  \uppercase\expandafter{\romannumeral6} and Section \uppercase\expandafter{\romannumeral7}.

% needed in second column of first page if using \IEEEpubid
%\IEEEpubidadjcol

\section{Related Works}
The RB problem was first proposed by Budak et al. \cite{budak2011limiting}. They presented a multi-campaign IC-model, and showed that RB can be generalized to the submodular maximization problem. Then, they proved that the objective function of RB is submodular and obtained a constant approximation ratio through greedy strategy. He et al. \cite{he2012influence} considered the competitive LT-model for RB problem and designed a $(1-1/e)$-approximation algorithm. Fan et al. \cite{fan2013least} proposed the least cost RB problem under the opportunistic one-active-one model and obtained a valid theoretical bound. Then, they considered RB problem under the time constraint, constrained by a deadline $T$ \cite{fan2014maximizing}. In addition to spreading positive cascade, there were two other methods for RB. One was protecting the most influential nodes from influenced by rumor cascade so that the influence of negative information can be reduced \cite{fan2013least} \cite{ma2016identifying} \cite{wang2013negative}. The other was removing some of relationships (edges) that play a central role in networks to limit the spread of misinformation \cite{khalil2014scalable} \cite{kimura2008minimizing} \cite{tong2012gelling}. Other researches about removing nodes or edges to block rumor, please reference \cite{ma2016identifying} \cite{wang2013negative} \cite{khalil2014scalable} \cite{kimura2008minimizing} \cite{tong2012gelling}. Please read the Srijan's comprehensive survey \cite{kumar2018false} if you are interested in more problems about misinformation.

After Kempe's seminal work \cite{kempe2003maximizing}, a large number of related researches have been done. They try to overcome the high time complexity of Greedy algorithm. It is \#P-hard \cite{chen2010scalable} \cite{chen2010scal} to compute the exact influence of a seed set under the IC-model and LT-model. Monte-Carlo simulation was adopted by many researchers to estimate the expected influence, but the computational cost was unacceptable when applied to large networks. Becasue of the low efficiency of Monte-Carlo simulation, a lot of researchers attempted improve the computational efficiency or overcome the Monte-Carlo simulations \cite{leskovec2007cost} \cite{goyal2011celf++} \cite{chen2011influence} \cite{zhang2014recent} \cite{ok2016maximizing} \cite{nguyen2017billion} \cite{li2019tiptop}. For example, Leskovec et al. proposed an CELF algorithm \cite{leskovec2007cost} with a lazy-forward evaluation, which avoids unnecessary computation by estimating the upper bound of influence. CELF++ \cite{goyal2011celf++}, an improved verson of CELF, reduced its time complexity. The effect was not satisfactory until the emergence of RIS. Reverse influence sampling (RIS) was proposed firstly by Brogs et al. \cite{borgs2014maximizing}, then a series of efficient randomized algorithm arised like TIM/TIM+ \cite{tang2014influence}, IMM \cite{tang2015influence} and SSA/D-SSA \cite{nguyen2016stop}. They were scalable methods with $(1-1/e-\varepsilon)$-approximation guarantee for the IM problem. Recently, Li et al. \cite{li2019tiptop} proposed TIPTOP based on RIS, an almost exact solutions for IM in in Billion-Scale Networks, which tried to reduces the number of samples as much as possible. Inspired by them, Tong et al. \cite{tong2017efficient} presented an efficient randomized algorithm for RB problem, whose sampling technique is called Random R-tuple. Besides, in order to improve time performance better, Tong et al. proposed a novel sampling method, hybrid sampling technique \cite{DBLP:journals/corr/abs-1901-05149}, which attached high weights to the users who are prone to be affected by rumor instead of sampling the nodes uniformly.

\section{Problem Formulation}
In this section, we introduce the MF-model and formulate the MFRB problem.
\subsection{Influence Model}
A social network can be given by a directed graph $G=(V,E)$ where $V=\{v_1,v_2,...,v_n\}$ is the set of $n$ users, $E=\{e_1,e_2,...,e_m\}$ is the set of $m$ directed edges which describe the relationship between users. The node set and edge set for graph $G$ can be referred as $V(G)$ and $E(G)$, respectively. For an edge $e=(u,v)$, $u$ is an incoming neighbor of $v$ and $v$ is an outgoing neighbor of $u$. We use $N^-(v)$ and $N^+(v)$ to denote the set of incoming neighbors and outgoing neighbors of node $v$, respectively. To simulate the diffusion process, there are two classical diffusion models, IC-model and LT-model, proposed by Kempe et al. \cite{kempe2003maximizing}. 

\begin{defn}[IC-model]
	It assumes that when a node $u$ is activated in this round, in the next round, which can execute an activation attempt to activate those inactive nodes $v$ in its outgoing neighbors $N^+(u)$ with a predefined probability. Each edge $(u,v)$ is associated with a activation probability $p_{uv}\in [0,1]$ and the activation process of different edges or different round is independent. Finally, the diffusion process stops if there is no nodes can be activated in future.
\end{defn}
\begin{defn}[LT-model]
	It assumes that each edge $(u,v)$ is associated with a weight $b_{uv}\geq0$ and each node $v$ has a threshold $\theta_v$ distributed in $[0,1]$ uniformly. For each node $v$, we require that $\Sigma_{u\in N^-(v)}b_{uv}\leq 1$, and define $A(v)$ as the set of active incoming neighbors to node $v$. The node $v$ can be activated in this round when satisfying $\Sigma_{u\in A(v)}b_{uv}\geq\theta_v$. Finally, the diffusion process stops if there is no nodes can be activated in future.
\end{defn}

Next, the monotonicity and submodularity can be defined here. We say that a set function $f:2^V\rightarrow\mathbb{R}$ is monotone if for any subsets $S\subseteq T\subseteq V$, $f(S)\leq f(T)$. A set function is submodular if for any $S \subseteq T \subseteq V$ and $u\in V \setminus T$, the marginal gain of $u$ when added to $T$ is less or equal to that when added to $S$. That is, $f(S \cup \{u\})-f(S) \geq f(T\cup \{u\}) - f(T)$.

\subsection{Realization}
Given a dircted graph $G=(V,E)$, a realization ${\rm g}=(V({\rm g}),E({\rm g}))$ is a subgraph of $G$ satisfying that $V({\rm g})=V(G)$ and $E({\rm g})\subseteq E(G)$. Under the IC-model, the diffusion probability of those edges $E({\rm g})$ in the realization ${\rm g}$ is equal to $1$. Those edges in $E({\rm g})$ are referred as to live edges, otherwise, called blocked edges. Under the IC-model, for each edge $e=(u,v)\in E(G)$, it appears in realization ${\rm g}$ with probability $p_{uv}$. Let $\Pr[{\rm g}]$ be the probability of realization ${\rm g}$ generated from $G$ under the IC-model, we have
\begin{equation}
\Pr[{\rm g}]=\prod_{e\in E({\rm g})}p_e\prod_{e\in E(G)\backslash E({\rm g})}(1-p_e)\label{e1}
\end{equation}
Obviously, there are $2^m$ possible realizations in all. The diffusion process in a realization ${\rm g}$ is a deterministic process. Thus, we can think about the propagation process from two different perspectives. Given a seed set $S$, the diffusion process can be considered as a stochastic propagation process on graph $G$, or a deterministic propagation process on a realization ${\rm g}$ generated from $G$.

In classical IM problem, we usually denote by $\sigma(S)$ the expected number of active nodes (influence) given a seed set $S$. Under the IC/LT-model, we have
\begin{equation}
\sigma(S)=\sum_{{\rm g}\in\mathcal{G}}\Pr[{\rm g}]\cdot\sigma_{{\rm g}}(S)
\end{equation}
where $\mathcal{G}$ is the set of all realizations generated from $G$ and $\sigma_{{\rm g}}(S)$ is the number of nodes for which there is a directed path of live edges from a node in $S$ in the realization ${\rm g}$. 
\begin{lem}[\cite{kempe2003maximizing}]
The objective function $\sigma(\cdot)$ is monotone non-decreasing and submodular under the IC/LT-model.
\end{lem}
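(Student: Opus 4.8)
The plan is to exploit the realization (live-edge) decomposition $\sigma(S)=\sum_{{\rm g}\in\mathcal{G}}\Pr[{\rm g}]\cdot\sigma_{{\rm g}}(S)$ already recorded above, together with the elementary fact that a non-negative linear combination of monotone non-decreasing submodular functions is again monotone non-decreasing and submodular. Hence it suffices to prove that for each fixed realization ${\rm g}$ the deterministic function $\sigma_{{\rm g}}(\cdot)$ enjoys both properties, and then average against the weights $\Pr[{\rm g}]\ge 0$.

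First I would make the reachability description explicit: for a node $v$ let $R_{{\rm g}}(v)\subseteq V$ denote the set of nodes reachable from $v$ along a directed path of live edges in ${\rm g}$. Then $\sigma_{{\rm g}}(S)=\bigl|\bigcup_{v\in S}R_{{\rm g}}(v)\bigr|$, i.e.\ $\sigma_{{\rm g}}$ is a coverage (set-cover) function. Monotonicity is immediate, since $S\subseteq T$ forces $\bigcup_{v\in S}R_{{\rm g}}(v)\subseteq\bigcup_{v\in T}R_{{\rm g}}(v)$. For submodularity, fix $S\subseteq T\subseteq V$ and $u\in V\setminus T$; then $\sigma_{{\rm g}}(S\cup\{u\})-\sigma_{{\rm g}}(S)=\bigl|R_{{\rm g}}(u)\setminus\bigcup_{v\in S}R_{{\rm g}}(v)\bigr|$ and likewise with $T$ in place of $S$, and because $\bigcup_{v\in S}R_{{\rm g}}(v)\subseteq\bigcup_{v\in T}R_{{\rm g}}(v)$ the residual set for $T$ is contained in that for $S$, which gives $\sigma_{{\rm g}}(S\cup\{u\})-\sigma_{{\rm g}}(S)\ge\sigma_{{\rm g}}(T\cup\{u\})-\sigma_{{\rm g}}(T)$.

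For the IC-model this completes the argument, since by construction each edge $e$ is live independently with probability $p_e$ (cf.\ Eq.~\eqref{e1}), and the deterministic spread from $S$ on the sampled graph ${\rm g}$ has the same distribution as the IC diffusion from $S$; thus $\sigma(S)=\sum_{{\rm g}}\Pr[{\rm g}]\,\sigma_{{\rm g}}(S)$ is a non-negative combination of monotone submodular functions. The one genuinely delicate point, which I expect to be the \emph{main obstacle}, is the LT-model: there the relevant ``realization'' is not an independent edge sampling but a process in which each node $v$ selects at most one live incoming edge --- edge $(u,v)$ with probability $b_{uv}$ and no incoming edge with probability $1-\sum_{u\in N^-(v)}b_{uv}$ --- and one must verify, via a deferred-decisions / coupling argument in the style of Kempe et al., that reachability from $S$ in this random subgraph has exactly the same distribution as the LT activation process started at $S$. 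Once that equivalence is established, $\sigma_{{\rm g}}(\cdot)$ is again a coverage function on the sampled subgraph, the per-realization claim above applies verbatim, and averaging over ${\rm g}$ with weights $\Pr[{\rm g}]$ yields the lemma.
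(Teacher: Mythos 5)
The paper offers no proof of this lemma---it is imported by citation from Kempe et al.~\cite{kempe2003maximizing}---and your proposal reproduces exactly the argument of that source: the live-edge realization decomposition, the observation that each $\sigma_{{\rm g}}(\cdot)$ is a coverage function (hence monotone and submodular), and closure of these properties under non-negative linear combination. Your write-up is correct, and you rightly flag the only nontrivial ingredient, namely the deferred-decision coupling showing that the LT process is distributionally equivalent to reachability in the random subgraph where each node keeps at most one incoming edge $(u,v)$ with probability $b_{uv}$; that equivalence is precisely Kempe et al.'s Claim for the LT-model and would need to be carried out (or cited) to make the LT half fully rigorous.
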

\begin{rem}
The function $\sigma(\cdot)$ is a general notation to represent influence function, thus, every time we mention it, we need to emphasize which diffusion model it is based on.
\end{rem}

\subsection{Problem Definition}
First, let us consider a scenario with composed influence under a single cascade. Considering a product with $r$ features and a directed social network $G=(V,E)$, the diffusion process can be regarded as discrete steps:
\begin{enumerate}
	\item Each node represents a user, and there are two possible states associated with each user, active and inactive. The user is active when he/she is willing to purchase this product. Initially, all users are inactive.
	\item Each edge $(u,v)$ is associated with a $r$-dimensional probability vector $(p_{uv}^{1}, p_{uv}^{2},..., p_{uv}^{r})$, where $p_{uv}^{i}$ represents the activation probability of feature $i$. When user $u$ is activated, he/she will attempt to motivate his/her inactive outgoing neighbor $v$ to accept feature $i$ with probability $p_{uv}^{i}$. In this activation attempt, maybe $v$ will accept one or many features.
	\item If user $v$ receives influence from more than one active incoming neighbors simultaneously, $v$ will treat their features independently.
	\item Each user $v$ has a threshold $\theta_v$, representing the threshold that $v$ will purchase this product, and a weight vector $(w^1_v,w^2_v,...,w^r_v)$, where $w_v^i$ represents the weight of feature $i$ and $\sum_{i=1}^{r}w_v^i=1$. User $v$ will be activated if and only if the total weight of accepted features is larger than or equal to $\theta_v$.
	\item Initially, a seed set, containing initial users, is activated. At each step, every user checks whether the activated condition is satisfied. The process ends if no user becomes newly active at current step.
\end{enumerate}
\begin{obs}
	According to above composed influence model, the expected influence $\sigma(\cdot)$ (active nodes) is not submodular.
\end{obs}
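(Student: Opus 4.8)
Since the statement is a negative one, the plan is to build an explicit counterexample rather than argue in general. The intuition is that the threshold‑based activation rule creates \emph{complementarity} between features: a user whose unit weight is spread over several features may need influence on more than one feature before his accumulated weight reaches $\theta_v$. Hence two different seeds, each able to deliver a \emph{different} feature to that user, can jointly activate him even though neither can do so alone. This makes the marginal gain of a seed potentially \emph{larger} when it is added to a bigger set, which is exactly what submodularity forbids.

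Concretely, I would take $r=2$ and a graph on three nodes $a,b,v$ with edges $(a,v)$ and $(b,v)$. Assign the probability vectors $(p_{av}^1,p_{av}^2)=(1,0)$ and $(p_{bv}^1,p_{bv}^2)=(0,1)$, so $a$ deterministically passes feature $1$ (and only feature $1$) to $v$, while $b$ deterministically passes feature $2$ (and only feature $2$) to $v$. Let $v$ have weight vector $(w_v^1,w_v^2)=(\tfrac12,\tfrac12)$ and threshold $\theta_v=1$ (anything in $(\tfrac12,1]$ works); the thresholds of $a$ and $b$ are irrelevant since seeds start active. Taking all probabilities in $\{0,1\}$ makes the unique realization deterministic, so $\sigma$ is simply the number of reachable nodes and no probabilistic bookkeeping is needed.

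With this instance one checks directly that $\sigma(\emptyset)=0$, that $\sigma(\{a\})=\sigma(\{b\})=1$ (each seed activates only itself, since $v$ never accumulates weight $\theta_v=1$ from a single feature), and that $\sigma(\{a,b\})=3$ (now $v$ accepts both features, reaching weight $1\ge\theta_v$, and is activated). Setting $S=\emptyset$, $T=\{b\}$ and $u=a$, the marginal gain on the smaller set is $\sigma(\{a\})-\sigma(\emptyset)=1$, whereas the marginal gain on the larger set is $\sigma(\{a,b\})-\sigma(\{b\})=2$. Since $1<2$, the submodular inequality $f(S\cup\{u\})-f(S)\ge f(T\cup\{u\})-f(T)$ fails, so $\sigma(\cdot)$ is not submodular. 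The gap can be made arbitrarily large by hanging a long out‑path, or a large out‑tree, below $v$, every edge of which carries probability $1$ on some feature its parent will already have accepted.

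I do not expect a real obstacle here; the only care needed is bookkeeping — checking that each ingredient (seeds active regardless of their own thresholds, a node treating features from distinct active in‑neighbors independently, and $\sum_i w_v^i=1$) is used consistently with the model just defined. It is worth remarking, to confirm the effect genuinely comes from having several features, that when $r=1$ the rule degenerates to ordinary IC activation (a node accepts its single feature with weight $1\ge\theta_v$), for which submodularity does hold by Lemma~1; thus $r\ge 2$ is essential.
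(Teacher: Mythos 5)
Your proposal is correct and follows essentially the same route as the paper: an explicit counterexample in which a node's threshold requires contributions from more than one feature, so that seeds delivering distinct features are complements, violating diminishing marginal gains. The paper's instance uses five features with $w_v^i=0.2$ and $\theta_v=0.5$ (three features needed) while yours uses two features with $\theta_v=1$; the difference is cosmetic, and your bookkeeping of $\sigma$ (counting the seeds themselves) is if anything slightly more careful than the paper's.
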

\begin{proof}
	We take a counterexample to show that. Considering a product associated with five features, a user $v$ has five incoming neighbors $\{u_1,u_2,u_3,u_4,u_5\}$. For each edge $(u_i,v)$, we define $p_{u_iv}^{i}=1$ and other $p_{u_iv}^{j}=0$ for $j\neq i$. We assume that user $v$ has a threshold $\theta_v=0.5$ and weight $w_v^i=0.2$ on each feature $i$. Obviously, $\sigma(\{u_1,u_3\})-\sigma(\{u_1\})=0<\sigma(\{u_1,u_2,u_3\})-\sigma(\{u_1,u_2\})=1$ and $\{u_1\}\subseteq \{u_1,u_2\}$, contradicting the property of diminishing marginal gain. Thus, $\sigma(\cdot)$ is not submodular under the composed influence model.
\end{proof}

Are there any techniques improving the composed influence to make the expected influence be submodular? We assume user $v$ will be influenced by the features of his/her incoming neighbor $u$ only when $u$ is activated. This condition can be relaxed. Here, each feature can be spread individually, in other words, $v$ can be influenced by the accepted features of $u$, but $u$ is inactive. Thus, we can treat this relaxed diffusion model as a multi-dimensional IC-model. That is, each feature diffuses in its own dimension like the diffusion of IC-model and consults with other dimensions only when making decision to purchase the product. In order to simulate the real scene better, the threshold $\theta_v$ should be distributed in interval $[0,1]$ uniformly. In this paper, we assume that the weight for feature $i$ is equal for different users, $w^i\leftarrow w_u^i=w_v^i=...=w_z^i$. This property is useful to prove the submodularity later. So far, the revised composed influence model, called Multi-Feature Diffusion Model (MF-model), is formulated as follows:
\begin{figure}[!t]
	\centering
	\includegraphics[width=3.5in]{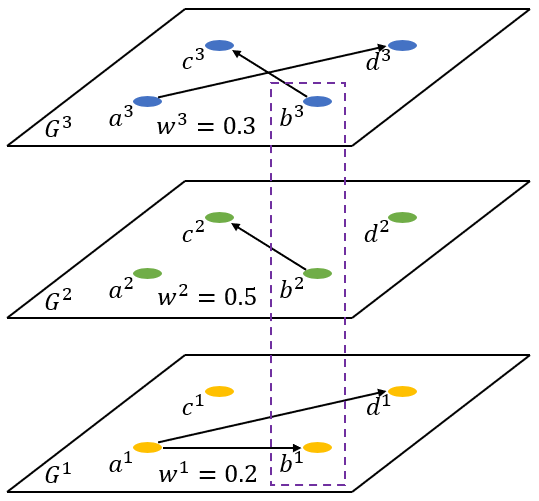}
	\caption{The form of expression to multi-layer network structure in MF-model, where each layer represents one feature and the nodes in the same column correspond to one user.}
	\label{fig_sim}
\end{figure}
\begin{defn}[MF-model]
	Given a product with $r$ features and a directed social network $G=(V,E)$, there exists an equivalent directed multi-layer graph $G'=(V',E')$. For each feature $i\in \{1,2,...,r\}$, make a copy $G^i$ of $G$. Here, we define $u^i$ in $G^i$ is the copy of corresponding node $u$ in $G$. The new graph $G'=G^1\cup G^2\cup ... \cup G^r$. For each edge $(u^i,v^i)$, the activation probability $p_{u^iv^i}$ is equal to $p_{uv}^i$ defined in composed influence model. For each layer $G^i$, only feature $i$ is spread on it and the diffusion process in this layer is independent to other layers. After all diffusion terminate, we need to determine whether a user is activated. Here, we define $x(v^i)=1$ if node $v^i$ accepts feature $i$, otherwise $x(v^i)=0$. If user $v$ satisfies the following condition:
	\begin{equation}
	\sum_{i=1}^{r}w^i\cdot x(v^i)\geq \theta_v
	\end{equation}
	we say this user $v$ is activated. Other definition is similar to that in composed influence model.
\end{defn}
\begin{rem}
The nodes in $V(G)$ are called user node, or user; but the nodes in $V(G')$ are called feature node, or feature. For example, a user node $u$ corresponds to feature node set $\{u^1,u^2,...,u^r\}$. A user node can be activated when satisfying Equation (3). To avoid confusion, we say a feature node is accepted when it is activated in its layer.
\end{rem}

Then, we take an example, shown as Fig. \ref{fig_sim}, to demonstrate how MF-model works. The example in Fig. \ref{fig_sim} is a realization of multi-layer graph $G'$, and $G^1,G^2,G^3$ corresponds to three features. Initially, user $b$ is activated, namely features $\{b^1,b^2,b^3\}$ are accepted. After diffustion stops, for user $c$, feature $c^2$ and $c^3$ are accepted. Assuming $\theta_c=0.5$, we have $\sum_{i=1}^{3}w^i\cdot x(c^i)=0.2\times 0+0.5\times 1+0.3\times 1=0.8>\theta_c$, thus, user $c$ is activated.

From now, we consider the MF-model as information diffusion model, and there are multiple cascades diffusing on the same social network. A user is $C$-active if he/she is activated by cascade $C$. Initially, all users are $\emptyset$-active. Shown as Definition 3, each feature diffuses independently, and then, we are able to determine whether the user is active after all feature diffusions have terminated. Let us consider the following scenario: there are two cascades spreading on network $G=(V,E)$, the positive cascade $C_p$ and the negative (rumor) cascade $C_r$. Given the rumor seed set $S_r$, we want to launch a positive cascade to compete against the rumor cascade. Denote by $S_p$ the seed set of positive cascade. The information from $S_r$ and $S_p$ diffuses simultaneously under the MF-model. On the layer $G^i$, if two opposite cascades activate a node $v^i$ successfully at the same time, rumor cascade has a higher priority. That is, $v^i$ will be $C_r$-accepted. After all feature diffusions have terminated, we are able to determine whether a user is $C_r$-active or $C_p$-active.

\begin{rem}
	For a user $u$, we define $\mathcal{F}(u)=\{u^1,u^2,...,u^r\}$ as $u$'s corresponding feature nodes. Assuming that a seed set $S$ is served for cascade $C$, we say $S$ is partially $C$-active if there exists some user $u\in S$, only part of feature nodes in $\mathcal{F}(u)$ accept cascade $C$. For example, only $\{u^1,u^3\}\subset\mathcal{F}(u)$ accept cascade $C$. On the contrary, $S$ is fully $C$-active if all feature nodes of each user in $S$ accept cascade $C$. Then, we denote by $S^i$, $i\in\{1,2,...,r\}$, the set of corresponding feature nodes of $S$ in layer $i$ that accept cascade $C$. If $S$ is partially $C$-active, then $|S^i|\leq|S|$. If $S$ is fully $C$-active, then $|S^i|=|S|$.
\end{rem}

In the real world, a user can hardly be so stupid that he/she believes the rumor that announces all the features of a product are not good. Thus, we assume that rumor seed set $S_r$ is partially $C_r$-active, in other words, there exists some user $u\in S_r$ who does not believe all the features of this product are bad when rumor is this product is totally bad. And positive seed set $S_p$ is fully $C_p$-active. A user is $\bar{C_r}$-active if he/she is not $C_r$-active. For user $v$, we define $r(v^i)=1$ if node $v^i$ accepts rumor cascade, otherwise $r(v^i)=0$. If user $v$ satisfies the following condition:
\begin{equation}
\sum_{i=1}^{r}w^i\cdot (1-r(v^i))\geq \theta_v
\end{equation}
this user $v$ is not activated by rumor cascade. If Inequality (4) is satisfied, we say this user $v$ is $\bar{C_r}$-active. Besides, we denote by $f(S_p)$ the expected number of $\bar{C_r}$-active users given a positive seed set $S_p$. So far, the Multi-Feature Rumor Blocking (MFRB) problem is formulated.
\begin{defn}[MFRB]
	Given a social network $G=(V,E)$, a positive integer $k$ and a partially $C_r$-active rumor set $S_r$, MFRB selects an fully $C_p$-active positive seed set $S_p^\circ$, $|S_p^\circ|\leq k$, from $V(G)\backslash S_r$ to make the expected number of $\bar{C}_r$-active users $f(S_p)$ maximized under the MF-model. We have
	\begin{equation}
	S_p^\circ=\arg \max_{S_p\subseteq V\backslash S_r, |S_p|\leq k}f(S_p)
	\end{equation}
\end{defn}
\begin{thm}
	In MFRB problem, the expected number of $\bar{C_r}$-active users $f(S_p)$ is monotone non-decreasing and submodular with respect to $S_p$.
\end{thm}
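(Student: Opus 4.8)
The plan is to peel off the two independent sources of randomness — the per‑layer live‑edge realizations and the users' uniform thresholds — and thereby write $f$ as a non‑negative weighted sum of $r$ single‑layer rumor‑blocking objectives, each of which is an average of coverage functions of $S_p$; monotonicity and submodularity are then inherited term by term.

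\textbf{Step 1 (threshold averaging and linearization over layers).} I would fix an arbitrary tuple of realizations ${\rm g}^1,\dots,{\rm g}^r$, one independent realization of each layer $G^i$ in the sense of the realization model above. Conditioned on these, each feature's diffusion — and hence the rumor‑acceptance indicator $r_{{\rm g}^i}(v^i)\in\{0,1\}$ of every feature node — is deterministic and does not yet involve the thresholds. Since $\theta_v$ is uniform on $[0,1]$ and independent of the realizations and across users, and $\sum_i w^i\bigl(1-r_{{\rm g}^i}(v^i)\bigr)\in[0,1]$, I would obtain $\Pr\bigl[v\text{ is }\bar{C_r}\text{-active}\mid {\rm g}^1,\dots,{\rm g}^r\bigr]=\sum_{i=1}^r w^i\bigl(1-r_{{\rm g}^i}(v^i)\bigr)$. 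Taking the expectation over the realizations and using that layer $i$'s diffusion — with the fixed, partially‑active rumor seeds $S_r^i$ and the positive seeds $S_p^i$, which equal the image of $S_p$ in every layer because $S_p$ is fully $C_p$‑active — depends only on ${\rm g}^i$, this gives
\[
f(S_p)=\sum_{i=1}^r w^i\, g_i(S_p),\qquad g_i(S_p):=\sum_{v\in V}\mathbb{E}_{{\rm g}^i}\!\left[1-r_{{\rm g}^i}(v^i)\right],
\]
where $g_i(S_p)$ is the expected number of layer‑$i$ feature nodes that do not accept the rumor. Since $w^i\ge0$, it suffices to show each $g_i$ is monotone non‑decreasing and submodular.

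\textbf{Step 2 (distance characterization within a realization, and coverage structure).} Working inside a fixed ${\rm g}^i$, I would prove that $v^i$ accepts the rumor iff $d_{{\rm g}^i}(S_r^i,v^i)\le d_{{\rm g}^i}(S_p^i,v^i)$, where $d_{{\rm g}^i}$ denotes shortest‑path length along live edges (equal to $\infty$ when there is no live path), the $\le$ encoding the rumor's tie‑breaking priority. Equivalently, $v^i$ is \emph{saved} iff $S_p^i\cap B_{v^i}\neq\emptyset$ with $B_{v^i}:=\{u^i:\ d_{{\rm g}^i}(u^i,v^i)<d_{{\rm g}^i}(S_r^i,v^i)\}$ a set fixed by ${\rm g}^i$ and $S_r^i$ alone. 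Hence $g_i(S_p)=\sum_{{\rm g}^i}\Pr[{\rm g}^i]\sum_{v\in V}\mathbf{1}\bigl[S_p^i\cap B_{v^i}\neq\emptyset\bigr]$. Each inner indicator, viewed as a function of the user set $S_p$ (adding a user $u$ to $S_p$ adds $u^i$ to $S_p^i$), is a coverage function — monotone non‑decreasing and submodular — so summing over $v$ and averaging with the non‑negative weights $\Pr[{\rm g}^i]$ preserves both properties. Therefore $g_i$, and consequently $f=\sum_i w^i g_i$, is monotone non‑decreasing and submodular.

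\textbf{Main obstacle.} The only delicate step is the distance characterization in Step 2: a priori, an intermediate node on a shortest live rumor‑path to $v^i$ might have been claimed by the positive cascade earlier and thus fail to relay the rumor. I would rule this out by observing that if such a node $x$ on a shortest $S_r^i$‑to‑$v^i$ live path were claimed by the positive cascade at a time strictly below $d_{{\rm g}^i}(S_r^i,x)$, then splicing a positive live path into $x$ with the remaining suffix would yield a positive live path to $v^i$ of length strictly less than $d_{{\rm g}^i}(S_r^i,v^i)$, forcing $v^i$ to be positive as well — a contradiction. Hence along any shortest rumor‑path to a rumor‑accepting node every node is itself rumor‑accepting, so the path delivers the rumor on time and the round‑by‑round competitive dynamics (with rumor priority on ties) agree with the distance rule. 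The remaining ingredients — monotonicity of shortest‑path distance under enlarging the source set, submodularity of coverage functions, and closure of monotone submodular functions under non‑negative linear combinations — are routine and I would not dwell on them.
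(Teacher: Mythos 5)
Your proof is correct, and its skeleton coincides with the paper's: your Step 1 is exactly the paper's Equation (6) — the uniform threshold turns the activation condition into $\Pr[v\text{ is }\bar{C_r}\text{-active}\mid \text{realizations}]=\sum_i w^i(1-r(v^i))$, so $f$ becomes a non-negative combination $\sum_i w^i f^i$ of per-layer "nodes not reached by rumor" objectives. Where you diverge is the key lemma: the paper simply cites Tong et al.\ \cite{tong2017efficient} for the monotonicity and submodularity of the per-layer function $f_{{\rm g}^i}$, whereas you reprove it from scratch via the distance characterization (a feature node is saved iff some positive seed is strictly closer than the rumor under live-edge distance, with rumor winning ties), which exposes each per-realization term as a coverage function. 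Your "main obstacle" splicing argument is exactly the induction needed to reconcile the round-by-round competitive dynamics with the distance rule, so your version is self-contained where the paper's is not — at the cost of length. One small caveat: your "saved iff $S_p^i\cap B_{v^i}\neq\emptyset$" equivalence fails for feature nodes with $d_{{\rm g}^i}(S_r^i,v^i)=\infty$, which are saved regardless of $S_p$; for those nodes the indicator is the constant $1$ rather than a coverage indicator. Since constants are monotone and submodular, this does not damage the conclusion, but the equivalence as stated (and the "accepts the rumor iff $d(S_r^i,v^i)\le d(S_p^i,v^i)$" version, which would wrongly count $\infty\le\infty$) should carry the explicit side condition that the rumor can reach $v^i$ in ${\rm g}^i$.
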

\begin{proof}
	In order to prove this theorem, we need to represent $f(S_p)$ mathmatically, firstly. The $f(S_p)$ under the MF-model can be defined as follows:
	\begin{equation}
	f(S_p)=\sum_{i=1}^{r}w^i\sum_{{\rm g}^i\in \mathcal{G}^i}\Pr[{\rm g}^i]\cdot f_{{\rm g}^i}(S_p^i)
	\end{equation}
	where $f_{{\rm g}^i}(S_p^i)$ is the number of feature nodes that cannot be reached by rumor cascade from $S_r^i$ in the realization ${\rm g}^i$ of graph $G^i$, and $S_p^i$ is the set of feature nodes in layer $i$ corresponding to users in $S_p$ according to fully active assumption of $S_p$.
	
	Then, $\sum_{{\rm g}^i\in \mathcal{G}^i}\Pr[{\rm g}^i]\cdot f_{{\rm g}^i}(S_p^i)$ is the average number of feature nodes, which is $\bar{C_r}$-accepted in feature $i$. Becuase the threshold $\theta_v$ is uniformly distributed in $[0,1]$ and $\sum_{i=1}^{r}w^i=1$, each $\bar{C}_r$-active node $u^i$ contributes $w^i$ to the expectation of $\bar{C}_r$-accepted users. In other words, the probability of user $u$ terminated as $\bar{C}_r$-active increases by $w^i$, so $f(S_p)$ increases by $w^i$. In addition, $f_{{\rm g}^i}(S^i)$ is monotone non-decreasing and submodular, which has been proven by Tong et al. \cite{tong2017efficient}. $f(S_p)$ is a linear combination of $f_{{\rm g}^i}(S_p^i)$, thus, $f(S_p)$ is monotone and submodular with respect to $S_p$.
\end{proof}

\section{Sampling Technique}
In last section, we have proven that the MFRB problem is monotone non-decreasing and submodular, thus, the simple greedy algorithm can get a $(1-1/e)$-approximation solution \cite{nemhauser1978analysis}. However, the computational cost of greedy algorithm is too high because computing the objective function of MFRB is \#P-hard \cite{chen2010scalable}. Therefore, it is not advisable to compute $f(S_p)$ directly, instead of that, we can find an estimator of $f(S_p)$ by some sampling techniques, and then make this estimator maximized. Here, we will get help from Random R-tuple sampling technique, provided by Tong et al. \cite{tong2017efficient}, to design our estimator. First, we define the expected $\bar{C_r}$-accepted feature nodes in layer $i$ as $f^i(S^i_p)$:
\begin{equation}
f^i(S^i_p)=\sum_{{\rm g}^i\in \mathcal{G}^i}\Pr[{\rm g}^i]\cdot f_{{\rm g}^i}(S_p^i)
\end{equation}

\begin{algorithm}[!t]
	\caption{\textbf{R-sampling $({\rm g}^i,v^i,S_r^i)$}}\label{a1}
	\begin{algorithmic}[1]
		\renewcommand{\algorithmicrequire}{\textbf{Input:}}
		\renewcommand{\algorithmicensure}{\textbf{Output:}}
		\REQUIRE ${\rm g}^i=(V^i,E^i({\rm g}))$, $v^i$ and $S_r^i$
		\ENSURE $V^*$ or $V^i$
		\STATE Initialize $V_{cur}\leftarrow\{v^i\}$
		\STATE Initialize $V^*\leftarrow\emptyset$
		\STATE Initialize an empty queue $Q$
		\WHILE {true}
		\IF {$V_{cur}=\emptyset$}
		\STATE Return $V^i$
		\ENDIF
		\IF {$V_{cur}\cap S_r^i\neq\emptyset$}
		\STATE Return $V^*$
		\ENDIF
		\STATE $V^*\leftarrow V^*\cup V_{cur}$
		\STATE $Q\leftarrow Q\cup V_{cur}$
		\STATE $V_{cur}\leftarrow \emptyset$
		\WHILE {$Q\neq \emptyset$}
		\STATE $u^i=Q$.pop()
		\STATE $V_{cur}\leftarrow V_{cur}\cup \{t^i|t^i\in N^-(u^i)$ and $t^i\notin V^*\}$
		\ENDWHILE
		\ENDWHILE
	\end{algorithmic}
\end{algorithm}

For any feature node $v^i\in V(G^i)$, we use R-tuple sampling technique \cite{tong2017efficient} on graph $G^i=(V^i,E^i)$ given rumor accepted set $S_r^i$, here, we call it as R-sampling. Given ${\rm g}^i=(V^i,E^i({\rm g}))$ as a realization of $G^i$, feature node $v^i$ and rumor accepted set $S^i_r$, the R-sampling is shown as Algorithm \ref{a1}, which is a little different from the original version in \cite{tong2017efficient}. The R-sampling starts from $v^i$ in $V^*$ and determine whether the incoming neighbors of the nodes in $V^*$ can be added to $V^*$ in a breadth-first searching until one of the rumor nodes in $S_r^i$ is reached or no node can be furthered reached. Then, the random R-sampling in graph $G^i$ can be generated by the following steps:
\begin{enumerate}
	\item Select a node $v^i$ from $V(G^i)$ uniformly.
	\item Generate a realization ${\rm g}^i$ of $G^i$.
	\item Get an R-sampling $V^*$ returned by Algorithm 1, R-sampling $({\rm g}^i,v^i,S_r^i)$
\end{enumerate}
This process, called Single-Sampling, is shown as Algorithm \ref{a2}. Intuitively, $R^i$ contains the feature nodes that could prevent $v^i$ in ${\rm g}^i$ from influenced by rumor set $S_r^i$ when one of them accepts positive cascade. For any positive seed set $S_p$, we define:
\begin{equation}
x(S_p^i,R^i)=\left
\{\begin{IEEEeqnarraybox}[\relax][c]{l's}
1,&if $S_p^i\cap R^i\neq\emptyset$\\
0,&otherwise%
\end{IEEEeqnarraybox}
\right.
\end{equation}
\begin{rem}
	For convenience, we can consider positive seed set as $S_p=S_p^1\cup S_p^2\cup ... \cup S_p^r$ and rumor seed set as $S_r=S_r^1\cup S_r^2\cup ... \cup S_r^r$.
\end{rem}

\begin{algorithm}[!t]
	\caption{\textbf{Single-Sampling $(G^i,S_r^i)$}}\label{a2}
	\begin{algorithmic}[1]
		\renewcommand{\algorithmicrequire}{\textbf{Input:}}
		\renewcommand{\algorithmicensure}{\textbf{Output:}}
		\REQUIRE $G^i=(V^i,E^i)$ and $S_r^i$
		\ENSURE $R^i$
		\STATE Select a node $v^i$ from $V^i$ uniformly.
		\STATE Generate a realization ${\rm g}^i$ of $G^i$.
		\STATE $R^i\leftarrow$ R-sampling $({\rm g}^i,v^i,S_r^i)$
		\STATE Return $R^i$
	\end{algorithmic}
\end{algorithm}

\noindent
Here, it is easy to know that $x(S_p,R^i)=x(S_p^i,R^i)$ because $S_p^j\cap R^i=\emptyset$ when $i\neq j$. Under the set $S_r^i$, we generate a collection of Single-Sampling $\mathcal{R}^i=\{R^i_1,R^i_2,...,R^i_\pi\}$ given the feature $i$. We define $F_{\mathcal{R}^i}(S_p^i)$, the fraction of Single-Sampling in $\mathcal{R}^i$ covered by $S_p^i$, as follows:
\begin{equation}
F_{\mathcal{R}^i}(S_p^i)=\frac{1}{\pi}\cdot \sum_{j=1}^{\pi}x(S_p^i,R^i_j)
\end{equation}
\begin{lem}[\cite{tong2017efficient}]
	Given $G^i=(V^i,E^i)$ and $S_r^i$ for feature $i$, we have $\mathbb{E}[n\cdot F_{\mathcal{R}^i}(S_p^i)]=f^i(S_p^i)$ for $S_p^i\subseteq V^i\backslash S_r^i$.
\end{lem}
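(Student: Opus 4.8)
The plan is to establish the identity in two stages: a short probabilistic reduction to a single random sample, and then a deterministic, per-realization counting identity about the R-sampling of Algorithm~\ref{a1}. First I would observe that $\mathcal{R}^i=\{R^i_1,\dots,R^i_\pi\}$ consists of independent copies of one random Single-Sample $R^i$ produced by Algorithm~\ref{a2}, so by linearity of expectation together with the definitions of $F_{\mathcal{R}^i}$ and of the indicator $x(\cdot,\cdot)$,
\[
\mathbb{E}\big[n\cdot F_{\mathcal{R}^i}(S_p^i)\big]=\frac{n}{\pi}\sum_{j=1}^{\pi}\mathbb{E}\big[x(S_p^i,R^i_j)\big]=n\cdot\Pr\big[S_p^i\cap R^i\neq\emptyset\big].
\]
Since Algorithm~\ref{a2} first picks $v^i$ uniformly from the $n$ nodes of $V^i$ and then, independently, draws a realization ${\rm g}^i$ with probability $\Pr[{\rm g}^i]$, conditioning on these two choices turns the right-hand side into $\sum_{{\rm g}^i\in\mathcal{G}^i}\Pr[{\rm g}^i]\sum_{v^i\in V^i}x\big(S_p^i,R^i({\rm g}^i,v^i,S_r^i)\big)$. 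Comparing with the definition of $f^i(S_p^i)$, the lemma reduces to the deterministic claim that, for every realization ${\rm g}^i$,
\[
\big|\{\,v^i\in V^i : S_p^i\cap R^i({\rm g}^i,v^i,S_r^i)\neq\emptyset\,\}\big|=f_{{\rm g}^i}(S_p^i).
\]

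To prove this deterministic identity I would analyze Algorithm~\ref{a1} by a loop invariant: before the $d$-th pass of its outer loop, $V^*$ is exactly the set of nodes whose shortest distance to $v^i$ along live edges of ${\rm g}^i$ is at most $d-1$, and $V_{cur}$ is the set at distance exactly $d$; the search halts the first time this frontier is empty (returning $V^i$, the case where $v^i$ is unreachable from $S_r^i$) or meets $S_r^i$ (returning the accumulated $V^*$). Hence the output satisfies $R^i=\{\,u^i : d_{{\rm g}^i}(u^i,v^i)<d_{{\rm g}^i}(S_r^i,v^i)\,\}$ when $v^i$ is reachable from $S_r^i$, where $d_{{\rm g}^i}(S_r^i,v^i)=\min_{s^i\in S_r^i}d_{{\rm g}^i}(s^i,v^i)$, and $R^i=V^i$ otherwise. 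Therefore $S_p^i\cap R^i\neq\emptyset$ exactly when some positive seed of $S_p^i$ reaches $v^i$ strictly before any rumor seed; because the rumor cascade wins ties, this is precisely the condition under which the positive cascade blocks the rumor at or before $v^i$, i.e.\ the condition under which $v^i$ is one of the feature nodes counted by $f_{{\rm g}^i}(S_p^i)$. Summing the indicator over $v^i$ gives the claimed identity, and combining with the first stage yields $\mathbb{E}[n\cdot F_{\mathcal{R}^i}(S_p^i)]=f^i(S_p^i)$. Note also, as already remarked in the text, that $x(S_p,R^i)=x(S_p^i,R^i)$ since $R^i\subseteq V^i$, so the statement is unaffected by whether one writes $S_p$ or $S_p^i$.

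The main obstacle is the second stage: making the loop-invariant analysis of Algorithm~\ref{a1} watertight, in particular correctly pairing its two return branches with the ``$v^i$ unreachable from $S_r^i$'' and ``$v^i$ reachable from $S_r^i$'' cases, and aligning the strict inequality $d_{{\rm g}^i}(S_p^i,v^i)<d_{{\rm g}^i}(S_r^i,v^i)$ with the rumor-priority tie-breaking rule built into the definition of $f_{{\rm g}^i}$. A small point worth flagging is that the nodes $v^i$ unreachable from $S_r^i$ are covered only when $S_p^i\neq\emptyset$, which is harmless here since $S_p$ is fully $C_p$-active and nonempty. Since R-sampling is just the layer-wise instance of the Random R-tuple construction, this stage essentially reproduces the corresponding argument of Tong et al.~\cite{tong2017efficient}, which is why the lemma is attributed to them.
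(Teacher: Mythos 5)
The paper gives no proof of its own here -- the lemma is imported verbatim from Tong et al.\ \cite{tong2017efficient} -- and your two-stage argument (reduce to a single random sample by linearity and i.i.d.-ness, then establish the per-realization identity via the BFS loop invariant of Algorithm~\ref{a1} and the shortest-live-path characterization of the competitive diffusion with rumor-priority ties) is exactly the standard Random-R-tuple unbiasedness argument that the citation stands in for, and it is correct, with the one step still deserving care being the induction along shortest paths showing that a cascade at strictly smaller live-edge distance cannot be blocked en route. Your remark that the identity fails for $S_p^i=\emptyset$ (nodes unreachable from $S_r^i$ are counted by $f^i$ yet covered by no sample) is a genuine edge case of the lemma as stated, rightly dismissed here because $S_p$ is nonempty.
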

So far, we have obtained an unbiased estimator for $f^i(S_p^i)$, but it cannot be applied to solve our FMRB problem directly because multiple features exist in our problem. We can consider this problem in another way. Given $G'=(V',E')$ and rumor seed set $S_r$, $V'=V^1\cup V^2\cup...\cup V^r$, we select a feature node $v\in V'$ from these $nr$ nodes randomly. After confirming this feature node we select belongs to feature $i$, we generate a realization ${\rm g}^i$ of $G^i$ and then get a R-sampling $R$ returned by Algorithm \ref{a1}. We call this process as Multi-Sampling, which is shown as Algorithm \ref{a3}. Let $\mathcal{R}$ be a collection of Multi-Samplings, $\mathcal{R}=\{R_1,R_2,...,R_\theta\}$, that contains $\theta$ Multi-Samplings. We define $W_{\mathcal{R}}(S_p)$, the weighted average fraction of Multi-Samplings in $\mathcal{R}$ covered by $S_p$, as follows:
\begin{equation}
W_{\mathcal{R}}(S_p)=\frac{1}{\theta}\cdot\sum_{i=1}^{r}w^i \sum_{j=1}^{\theta}x(S_p^i,R_j)
\end{equation}
\begin{thm}
	Given $G=(V,E)$ and rumor seed set $S_r$, we have $\mathbb{E}[nr\cdot W_{\mathcal{R}}(S_p)]=f(S_p)$ for $S_p\subseteq V\backslash S_r$.
\end{thm}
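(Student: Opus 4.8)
The plan is to reduce the claim to the single-layer identity of Lemma~2 via linearity of expectation together with a conditioning argument on which layer a Multi-Sampling lands in. First I would recall that, combining Equations~(6) and~(7), the target quantity decomposes as $f(S_p)=\sum_{i=1}^{r}w^i f^i(S_p^i)$, where $S_p^i$ is the layer-$i$ copy of $S_p$; since $S_p$ is fully $C_p$-active and disjoint from $S_r$, each $S_p^i\subseteq V^i\setminus S_r^i$, so Lemma~2 is applicable on every layer. By Equation~(11) and linearity of expectation,
\[
\mathbb{E}[nr\cdot W_{\mathcal{R}}(S_p)] = \frac{nr}{\theta}\sum_{i=1}^{r}w^i\sum_{j=1}^{\theta}\mathbb{E}\bigl[x(S_p^i,R_j)\bigr],
\]
so it suffices to evaluate $\mathbb{E}[x(S_p^i,R_j)]$ for a single Multi-Sampling $R_j$; the samples are identically distributed, so this value does not depend on $j$.

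Next I would condition on the layer index of the feature node picked in the first step of Multi-Sampling. Each of the $r$ layers contains exactly $n$ of the $nr$ feature nodes, so the picked node lies in layer $i$ with probability $1/r$. If it instead lies in a layer $\ell\neq i$, then the returned R-sampling $R_j$ is a subset of $V^\ell$, which is disjoint from $S_p^i\subseteq V^i$; hence $x(S_p^i,R_j)=0$ in that case, and the cross-layer terms vanish. Conditioned on the picked node lying in layer $i$, Multi-Sampling reduces exactly to running Single-Sampling (Algorithm~2) on $G^i$ with rumor set $S_r^i$, i.e.\ it produces an $R_j$ with the same law as an $R^i$ of Algorithm~2. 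Applying Lemma~2 with a collection of size one (so that $F_{\mathcal{R}^i}(S_p^i)=x(S_p^i,R^i)$) gives $\mathbb{E}[x(S_p^i,R_j)\mid \text{layer }i]=f^i(S_p^i)/n$.

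Combining the two cases yields $\mathbb{E}[x(S_p^i,R_j)] = \tfrac{1}{r}\cdot\tfrac{f^i(S_p^i)}{n} = \tfrac{f^i(S_p^i)}{nr}$, and substituting back,
\[
\mathbb{E}[nr\cdot W_{\mathcal{R}}(S_p)] = \frac{nr}{\theta}\sum_{i=1}^{r}w^i\cdot\theta\cdot\frac{f^i(S_p^i)}{nr} = \sum_{i=1}^{r}w^i f^i(S_p^i) = f(S_p),
\]
as desired. I do not anticipate a serious obstacle; the only points needing care are (i) the bookkeeping that cross-layer terms are zero because $S_p^i$ and $R_j$ live in disjoint vertex sets when the sampled node is in another layer, and (ii) being explicit that the conditional law of a Multi-Sampling restricted to layer $i$ coincides with that of a Single-Sampling on $G^i$, which is exactly what licenses the invocation of Lemma~2.
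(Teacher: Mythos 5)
Your proposal is correct and follows essentially the same route as the paper: decompose $f(S_p)=\sum_i w^i f^i(S_p^i)$, use the uniform choice over the $nr$ feature nodes to extract the factor $1/r$ per layer (with cross-layer terms vanishing since $S_p^i\cap R_j=\emptyset$ when $R_j$ lies in another layer), and invoke Lemma~2 on each layer. Your conditioning argument is simply a more explicit rendering of the paper's identity $\mathbb{E}[F_{\mathcal{R}^i}(S_p^i)]=r\cdot\mathbb{E}[F_{\mathcal{R}}(S_p^i)]$.
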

\begin{proof}
	In Algorithm \ref{a3}, we select a node $v$ from $V'$ uniformly, which means that the average number of Multi-Samplings in $\mathcal{R}$ generated by a node in each feature $i$ is the same. We define the number of Multi-Samplings in $\mathcal{R}$ generated by a node in feature $i$ as $N_\mathcal{R}(i)$, thus, $\mathbb{E}[N_\mathcal{R}(i)]=\theta/r$ for $i\in\{1,2,...,r\}$. Therefore, $\mathbb{E}[F_{\mathcal{R}^i}(S_p^i)]$ can be expressed as
	\begin{equation}
	\mathbb{E}[F_{\mathcal{R}^i}(S_p^i)]=r\cdot \mathbb{E}[F_{\mathcal{R}}(S_p^i)]
	\end{equation}
	According to Equation (10) and (11), for $\mathbb{E}[nr\cdot W_{\mathcal{R}}(S_p)]$, we have the following observation:
	\begin{flalign}
		\mathbb{E}[nr\cdot W_{\mathcal{R}}(S_p)]&=\sum_{i=1}^{r}w^i\cdot \bigg(nr\cdot \mathbb{E}\Big[\frac{1}{\theta}\sum_{j=1}^{\theta}x(S_p^i,R_j)\Big]\bigg)\nonumber\\
		&=\sum_{i=1}^{r}w^i\cdot \big(nr\cdot \mathbb{E}[F_{\mathcal{R}}(S_p^i)]\big)\nonumber\\
		&=\sum_{i=1}^{r}w^i\cdot \big(n\cdot \mathbb{E}[F_{\mathcal{R}^i}(S_p^i)]\big)\nonumber\\
		&=\sum_{i=1}^{r}w^i\cdot f^i(S^i_p)\nonumber\\
		&=f(S_p)\nonumber
	\end{flalign}
	From above, we know that $nr\cdot W_{\mathcal{R}}(S_p)$ is an unbiased estimator to $f(S_p)$. Then, the theorem is proved.
\end{proof}

\begin{algorithm}[!t]
	\caption{\textbf{Multi-Sampling $(G,S_r)$}}\label{a3}
	\begin{algorithmic}[1]
		\renewcommand{\algorithmicrequire}{\textbf{Input:}}
		\renewcommand{\algorithmicensure}{\textbf{Output:}}
		\REQUIRE $G=(V,E)$ and $S_r$
		\ENSURE $R$
		\STATE Select a node $v$ from $V^1\cup V^2\cup...\cup V^r$ uniformly.
		\STATE Confirm $v\in V^i$
		\STATE Generate a realization ${\rm g}^i$ of $G^i$.
		\STATE $R\leftarrow$ R-sampling $({\rm g}^i,v^i,S_r^i)$
		\STATE Return $R$
	\end{algorithmic}
\end{algorithm}

\section{The Algorithm}
From the last section, $nr\cdot W_{\mathcal{R}}(S_p)$ over $\mathcal{R}$ can be used as an unbiased estimator of objective function $f(S_p)$. Before designing our algorithm, we need to introduce martingale and its relative properties first, defined as follows:
\begin{defn}[Martingale \cite{chung2006concentration}]
	A martingale is a sequence of random variables $Y_1,Y_2,Y_3,...$, such that $\mathbb{E}[|Y_i|]<+\infty$ and $\mathbb{E}[Y_i|Y_1,Y_2,...,Y_{i-1}]=Y_{i-1}$ for any $i$.
\end{defn}
Consider a collection of Multi-Samplings, $\mathcal{R}=\{R_1,R_2,...,R_\theta\}$. Let $p=f(S_p)/nr$, we define $M_k$ as
\begin{equation}
M_k=\sum_{j=1}^{k}\bigg(\sum_{i=1}^{r}w^i\cdot x(S_p^i,R_j)-p\bigg)
\end{equation}
where $k=\{1,2,...,\theta\}$. Becasue of the linearity of expectation, $p=\mathbb{E}[W_{\mathcal{R}}(S_p)]=\mathbb{E}[\sum_{i=1}^{r}w^i\cdot x(S_p^i,R_j)]$, we have $\mathbb{E}[M_i]=0$ and $\mathbb{E}[|M_i|]<+\infty$. The value of $x(S_p,R_j)$ is independent to the value from $x(S_p,R_1)$ to $x(S_p,R_{j-1})$, thus, $\mathbb{E}[M_i|M_1,M_2,...,M_{i-1}]=M_{i-1}$. Therefore, $M_1,M_2,...,M_\theta$ is a martingale.
\begin{lem}[\cite{chung2006concentration}]
	Let $Y_1,Y_2,Y_3,...$ be a martingale, such that $|Y_1|\leq a$, $|Y_j-Y_{j-1}|\leq a$ for each $j\in \{2,...,i\}$ and $Var[Y_1]+\sum_{j=2}^{\theta}Var[Y_j|Y_1,Y_2,...,Y_{j-1}]<=b$. Then for any $\gamma>0$, we have
	\begin{equation}
	\Pr[Y_i-\mathbb{E}[Y_i]\leq-\gamma]\leq \exp\bigg(-\frac{\gamma^2}{2b}\bigg)
	\end{equation}
	\begin{equation}
	\Pr[Y_i-\mathbb{E}[Y_i]\geq\gamma]\leq \exp\bigg(-\frac{\gamma^2}{\frac{2}{3}a\gamma+2b}\bigg)
	\end{equation}
\end{lem}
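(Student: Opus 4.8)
This is a Freedman/Bennett-type tail bound for a martingale, so the natural route is the exponential-moment (Chernoff) method carried out one increment at a time through the tower rule. I would first pass to the martingale-difference sequence: put $X_1=Y_1$ and $X_j=Y_j-Y_{j-1}$ for $2\le j\le i$, so that $Y_i=\sum_{j=1}^{i}X_j$; since a martingale has constant mean, subtracting $\mathbb{E}[Y_i]$ throughout lets me assume $\mathbb{E}[Y_i]=0$ and hence $\mathbb{E}[X_j\mid Y_1,\dots,Y_{j-1}]=0$. The hypotheses then read $|X_j|\le a$ and, because $\mathrm{Var}[Y_j\mid Y_1,\dots,Y_{j-1}]=\mathbb{E}[X_j^{2}\mid Y_1,\dots,Y_{j-1}]$, also $\sum_{j=1}^{i}\mathbb{E}[X_j^{2}\mid Y_1,\dots,Y_{j-1}]\le b$. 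For any $\lambda>0$, Markov's inequality gives
\[
\Pr[Y_i\le-\gamma]\le e^{-\lambda\gamma}\,\mathbb{E}[e^{-\lambda Y_i}],\qquad
\Pr[Y_i\ge\gamma]\le e^{-\lambda\gamma}\,\mathbb{E}[e^{\lambda Y_i}],
\]
so everything reduces to controlling the two moment generating functions of $Y_i$.

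The core is a single-increment estimate obtained by conditioning on $Y_1,\dots,Y_{j-1}$. Using $\mathbb{E}[X_j\mid\cdot]=0$ to kill the linear term, together with an elementary inequality for the exponential on the interval $[-\lambda a,\lambda a]$ that $\lambda X_j$ lives in, I would show
\[
\mathbb{E}\!\left[e^{\mp\lambda X_j}\mid Y_1,\dots,Y_{j-1}\right]\le 1+\psi\,\mathbb{E}\!\left[X_j^{2}\mid Y_1,\dots,Y_{j-1}\right]\le\exp\!\left(\psi\,\mathbb{E}\!\left[X_j^{2}\mid Y_1,\dots,Y_{j-1}\right]\right),
\]
with $\psi=\lambda^{2}/2$ on the down side (this is where convexity of $e^{-\lambda x}$ over the increment range, together with $e^{-u}\le 1-u+u^{2}/2$ for $u\ge 0$, yields the clean coefficient $1/2$) and $\psi=(e^{\lambda a}-1-\lambda a)/a^{2}$ on the up side (from the monotonicity of $x\mapsto(e^{x}-1-x)/x^{2}$, so that $\psi=\lambda^{2}/2+O(\lambda^{3}a)$). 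Peeling off the increments one at a time, using $\mathbb{E}[e^{-\lambda Y_i}]=\mathbb{E}\bigl[e^{-\lambda Y_{i-1}}\,\mathbb{E}[e^{-\lambda X_i}\mid Y_1,\dots,Y_{i-1}]\bigr]$ and induction on $i$, the conditional variances accumulate and are dominated by $b$, which gives $\mathbb{E}[e^{-\lambda Y_i}]\le\exp(\lambda^{2}b/2)$ and $\mathbb{E}[e^{\lambda Y_i}]\le\exp\bigl(\tfrac{e^{\lambda a}-1-\lambda a}{a^{2}}\,b\bigr)$.

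Finally I would optimise over $\lambda$. For the lower tail, $-\lambda\gamma+\lambda^{2}b/2$ is minimised at $\lambda=\gamma/b$ with value $-\gamma^{2}/(2b)$, which is the first inequality. For the upper tail I would insert the standard estimate $e^{\lambda a}-1-\lambda a\le\dfrac{\lambda^{2}a^{2}/2}{1-\lambda a/3}$, valid for $0<\lambda a<3$, and take $\lambda=\gamma/(b+a\gamma/3)$, which makes $-\lambda\gamma+\dfrac{\lambda^{2}b/2}{1-\lambda a/3}$ equal to $-\dfrac{\gamma^{2}}{2b+\tfrac{2}{3}a\gamma}$, the second inequality. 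The one genuinely delicate step is the single-increment bound: one must choose the elementary inequality for $e^{s}$ that is tight on the relevant half of $[-\lambda a,\lambda a]$ so that the $\tfrac{2}{3}a\gamma$ correction appears only on the upper-tail side while the lower tail stays purely sub-Gaussian, and one must verify that it is the conditional-variance hypothesis, rather than merely the increment range, that ends up in the exponent; granting that, the tower-rule recursion and the one-variable minimisation are routine. (Since the statement is quoted from \cite{chung2006concentration}, one may of course also simply invoke it.)
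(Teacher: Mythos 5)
The paper itself contains no proof of this lemma---it is quoted from Chung and Lu's survey---so the only question is whether your reconstruction is sound. Your upper-tail half is: the Chernoff/tower-rule recursion, the increment bound $\mathbb{E}[e^{\lambda X_j}\mid Y_1,\dots,Y_{j-1}]\le\exp\bigl(\tfrac{e^{\lambda a}-1-\lambda a}{a^2}\mathbb{E}[X_j^2\mid Y_1,\dots,Y_{j-1}]\bigr)$ via monotonicity of $(e^x-1-x)/x^2$, the estimate $e^{\lambda a}-1-\lambda a\le\tfrac{\lambda^2a^2/2}{1-\lambda a/3}$, and the choice $\lambda=\gamma/(b+a\gamma/3)$; this is the standard Bennett--Bernstein--Freedman argument and that half is correct.

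The lower-tail half has a genuine gap, and indeed the first inequality is not a theorem under the stated hypotheses alone. Your single-increment bound $\mathbb{E}[e^{-\lambda X_j}\mid\cdot]\le 1+\tfrac{\lambda^2}{2}\mathbb{E}[X_j^2\mid\cdot]$ rests on $e^{-u}\le 1-u+u^2/2$, which holds only for $u=\lambda X_j\ge 0$; on the part of the range where $X_j<0$ the exponential exceeds its second-order Taylor polynomial, and no choice of elementary inequality can deliver a purely sub-Gaussian lower tail from bounded increments plus a conditional-variance budget. A concrete counterexample: take $i=1$, $a=1$, and $Y_1=X$ with $\Pr[X=-1]=1/10$ and $\Pr[X=1/9]=9/10$, so $\mathbb{E}[X]=0$, $\mathrm{Var}[X]=1/9=b$, and $|X|\le 1$; then $\Pr[X\le-1]=1/10$, while the claimed bound is $\exp(-1/(2b))=\exp(-4.5)\approx 0.011$. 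The correct general lower tail (Chung--Lu's lower-tail theorem, or Freedman's inequality) carries the same $\tfrac{2}{3}a\gamma$ correction in the denominator as the upper tail. The clean $\exp(-\gamma^2/(2b))$ that the paper actually invokes in Equation (16) is true only because of extra structure in its increments: there $X_j=Z_j-p$ with $Z_j=\sum_i w^i\cdot x(S_p^i,R_j)\in[0,\bar w]$, and convexity of $z\mapsto e^{-\lambda z}$ on $[0,\bar w]$ gives $\mathbb{E}[e^{-\lambda Z_j}\mid\cdot]\le 1-(1-e^{-\lambda\bar w})p/\bar w\le\exp\bigl(-(1-e^{-\lambda\bar w})p/\bar w\bigr)$, after which optimizing $\lambda$ and using $(1-\varepsilon)\ln(1-\varepsilon)+\varepsilon\ge\varepsilon^2/2$ yields the multiplicative Chernoff lower tail with denominator $2\bar w$. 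So you should either restate the first inequality with denominator $2b+\tfrac{2}{3}a\gamma$, or prove it only for increments that are (after rescaling) $[0,1]$-valued minus their conditional mean, which is all the paper uses.
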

Considering the martingale $M_1,M_2,...,M_\theta$, we can set $a=1$ because $|M_1|\leq 1$ and $|M_j-M_{j-1}|\leq 1$ for each $j\in\{2,...,\theta\}$. Here, we define the maximum weight $\bar{w}$ over all features as $\bar{w}=\max\{w_1,w_2,...,w_r\}$. Obviously, we have $\sum_{i=1}^{r}w^i\cdot x(S_p^i,R_j)\leq \bar{w}\cdot x(S_p,R_j)$ because for each Multi-Sampling $R_j$, which can only be covered by one kind of feature nodes. If $x(S_p^y,R_j)=1$, then we have $x(S_p^z,R_j)=0$ for $z\in\{1,2,...,r\}\backslash\{y\}$. Based on the properties of variance and Equation (12), we can set $b=\bar{w}\cdot p\theta$ because
\begin{flalign}
	&Var[M_1]+\sum_{j=2}^{\theta}Var[M_j|M_1,M_2,...,M_{j-1}]\nonumber\\
	&=\sum_{j=1}^{\theta}Var[\sum_{i=1}^{r}w^i\cdot x(S_p^i,R_j)]\nonumber\\
	&=\sum_{j=1}^{\theta}\{\mathbb{E}[(\sum_{i=1}^{r}w^i\cdot x(S_p^i,R_j))^2]-(\mathbb{E}[\sum_{i=1}^{r}w^i\cdot x(S_p^i,R_j)])^2\big\}\nonumber\\
	&=\sum_{j=1}^{\theta}\{\mathbb{E}[\sum_{i=1}^{r}(w^i)^2\cdot x(S_p^i,R_j)]-p^2\big\}\\
	&\leq \sum_{j=1}^{\theta}\{\mathbb{E}[\sum_{i=1}^{r}w^i\cdot x(S_p^i,R_j)]\big\}\cdot \bar{w}\nonumber\\
	&=\bar{w}\cdot p\theta \nonumber
\end{flalign}
where the Inequality (15) holds because of the above analysis. If $x(S_p^y,R_j)=1$, then we have $x(S_p^z,R_j)=0$ for $z\in\{1,2,...,r\}\backslash\{y\}$. Thus, $\sum_{i=1}^{r}w^i\cdot x(S_p^i,R_j)=w^y\cdot x(S_p^y,R_j)$, so $(w^y\cdot x(S_p^y,R_j))^2=\sum_{i=1}^{r}(w^i)^2\cdot x(S_p^i,R_j)$. Then, we have following two inequality for any $\varepsilon>0$ according to Equation (13) (14):
\begin{equation}
\begin{aligned}
&\Pr\bigg[\sum_{j=1}^{\theta}\sum_{i=1}^{r}w^i\cdot x(S_p^i,R_j)-p\theta\leq -\varepsilon\cdot p\theta\bigg]\\
&\leq \exp\bigg(-\frac{\varepsilon^2}{2\bar{w}}\cdot p\theta\bigg)
\end{aligned}
\end{equation}
\begin{equation}
\begin{aligned}
&\Pr\bigg[\sum_{j=1}^{\theta}\sum_{i=1}^{r}w^i\cdot x(S_p^i,R_j)-p\theta\geq \varepsilon\cdot p\theta\bigg]\\
&\leq\exp\bigg(-\frac{\varepsilon^2}{2\bar{w}+\frac{2}{3}\varepsilon}\cdot p\theta\bigg)
\end{aligned}
\end{equation}

Borrowed from the idea of IMM algorithm \cite{tang2015influence}, our solution of MFRB problem can be designed, which consists of two stages as follows:

\begin{enumerate}
	\item Sampling Multi-Sampling: This stage generates Multi-Sampling iteratively and put them into $\mathcal{R}$ until satisfying a certain stopping condition.
	\item Node selection: This stage adopts greedy algorithm to drive a size-k user set $S_p$ that covers sub-maximum weight of Multi-Samplings in $\mathcal{R}$.
\end{enumerate}

\subsection{Node Selection}
Let $\mathcal{R}=\{R_1,R_2,...,R_\theta\}$ be a collection of Multi-Samplings and $W_{\mathcal{R}}(S_p)$ be the weighted average fraction of Multi-Samplings in $\mathcal{R}$ covered by $S_p$. The node selection stage is shown in Algorithm \ref{a4}. Here, we define the optimal solution as $S_p^\circ$ and optimal value as ${\rm OPT}=f(S_p^\circ)$. Because $W_{\mathcal{R}}(\cdot)$ is monotone non-decreasing and submodular, which guarantees that $W_{\mathcal{R}}(S_p^*)$ returned by Algorithm \ref{a4} satisfies $W_{\mathcal{R}}(S_p^*)\geq(1-1/e)\cdot W_{\mathcal{R}}(S_p^\circ)$.
\begin{algorithm}[!t]
	\caption{\textbf{NodeSelection $(\mathcal{R},k)$}}\label{a4}
	\begin{algorithmic}[1]
		\renewcommand{\algorithmicrequire}{\textbf{Input:}}
		\renewcommand{\algorithmicensure}{\textbf{Output:}}
		\REQUIRE $\mathcal{R}=\{R_1,R_2,...,R_\theta\}$ and $k$
		\ENSURE $\{S_p^*$, $W_{\mathcal{R}}(S_p^*)\}$
		\STATE Initialize $S_p^*\leftarrow\emptyset$
		\FOR {$1$ to $k$}
		\STATE $u=\arg\max_{u\in V\backslash S_r}(W_{\mathcal{R}}(S_p^*\cup \{u\})-W_{\mathcal{R}}(S_p^*))$
		\STATE $S_p^*=S_p^*\cup \{u\}$
		\ENDFOR
		\STATE Return $\{S_p^*$, $W_{\mathcal{R}}(S_p^*)\}$
	\end{algorithmic}
\end{algorithm}
\begin{lem}
	Given rumor seed set $S_r$, $W_{\mathcal{R}}(S_p)$ is monotone non-decreasing and submodular with respect to $S_p$.
\end{lem}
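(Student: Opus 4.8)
The plan is to exploit the fact that $W_{\mathcal{R}}(\cdot)$ is, up to nonnegative scaling, a sum of single--sample ``coverage indicators,'' each of which is individually monotone non-decreasing and submodular; since both properties survive nonnegative linear combinations, the lemma follows. First I would unwind the definition. Each Multi-Sampling $R_j\in\mathcal{R}$ is produced by Algorithm~\ref{a3} from one layer, say layer $i_j$, so $R_j\subseteq V^{i_j}$; hence $x(S_p^{i'},R_j)=0$ for every $i'\neq i_j$ and every $S_p$, and
\[
W_{\mathcal{R}}(S_p)=\frac{1}{\theta}\sum_{j=1}^{\theta}w^{i_j}\,x(S_p^{i_j},R_j).
\]
Because $S_p$ is fully $C_p$-active, the layer-$i$ footprint $S_p^{i}=\{u^{i}:u\in S_p\}$ is completely determined by $S_p$, and $x(S_p^{i_j},R_j)=1$ exactly when $S_p$ contains some user $u$ with $u^{i_j}\in R_j$. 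Fixing the set $B_j:=\{u\in V\setminus S_r: u^{i_j}\in R_j\}$, the map $g_j(S_p):=x(S_p^{i_j},R_j)$ therefore equals $1$ if $S_p\cap B_j\neq\emptyset$ and $0$ otherwise.

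Next I would verify that each $g_j$ is monotone non-decreasing and submodular on $2^{V\setminus S_r}$. Monotonicity is immediate: enlarging $S_p$ can only change the value from $0$ to $1$, never the reverse. For submodularity, take $S\subseteq T\subseteq V\setminus S_r$ and $u\in(V\setminus S_r)\setminus T$; the marginal gain $g_j(S\cup\{u\})-g_j(S)$ equals $1$ precisely when $u\in B_j$ and $S\cap B_j=\emptyset$, and equals $0$ otherwise. Since $T\cap B_j=\emptyset$ implies $S\cap B_j=\emptyset$, whenever the marginal gain at $T$ is $1$ the marginal gain at $S$ is also $1$; hence $g_j(S\cup\{u\})-g_j(S)\geq g_j(T\cup\{u\})-g_j(T)$, which is the diminishing-returns inequality. (Equivalently, one may invoke that the per-realization function $f_{{\rm g}^i}(\cdot)$ is monotone non-decreasing and submodular, as shown in \cite{tong2017efficient}, since $g_j$ is exactly the contribution of a single sampled root node to that function.)

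Finally, every coefficient $w^{i_j}/\theta$ is nonnegative, and a nonnegative linear combination of monotone non-decreasing submodular set functions is again monotone non-decreasing and submodular; applying this to $W_{\mathcal{R}}(S_p)=\sum_{j=1}^{\theta}(w^{i_j}/\theta)\,g_j(S_p)$ completes the proof. I do not expect a genuine obstacle: the argument is the standard one for RIS-type coverage objectives, and the only point requiring care is the bookkeeping that ties the one-layer support of each $R_j$ to the fully $C_p$-active footprint $S_p^{i}$, so that each summand is legitimately a single-set coverage indicator in the variable $S_p$.
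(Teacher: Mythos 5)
Your proof is correct and follows essentially the same route as the paper: both reduce $W_{\mathcal{R}}$ to a nonnegative combination of the per-sample indicators $x(\cdot,R_j)$ and verify that each such indicator is a monotone submodular coverage function. Your explicit observation that each $R_j$ lives in a single layer (so the double sum collapses to $\sum_j w^{i_j}x(S_p^{i_j},R_j)$) is a clean bit of bookkeeping the paper leaves implicit, but the substance of the argument is identical.
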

\begin{proof}
	First, we show $W_{\mathcal{R}}(\cdot)$ is monotone non-decreasing. For any positive seed set $S_p\subseteq V\backslash S_r$ and node $u\not\subseteq S_p\cup S_r$, we have
	\begin{equation}
	\begin{aligned}
	&W_{\mathcal{R}}(S_p\cup \{u\})-W_{\mathcal{R}}(S_p)\\
	&=\frac{1}{\theta}\cdot\sum_{i=1}^{r}w^i \sum_{j=1}^{\theta}(x(S_p^i\cup\{u^i\},R_j)-x(S_p^i,R_j))
	\end{aligned}
	\end{equation}
	It is monotone non-decreasing becuase $x(S_p^i,R_j)=1$ implies $x(S_p^i\cup\{u^i\},R_j)=1$, $W_{\mathcal{R}}(S_p\cup \{u\})-W_{\mathcal{R}}(S_p)\geq 0$. Next, we show $W_{\mathcal{R}}(\cdot)$ is submodular. Given any $S_{p1}\subseteq S_{p2}\subseteq V\backslash S_r$ and $u\not\subseteq S_{p2}\cup S_r$, it is equivalent to prove $x(S_{p1}^i\cup\{u^i\},R_j)-x(S_{p1}^i,R_j)\geq x(S_{p2}^i\cup\{u^i\},R_j)-x(S_{p2}^i,R_j)$ according to Equation (18). Here, we need to show that $x(S_{p1}^i\cup\{u^i\},R_j)-x(S_{p1}^i,R_j)=1$ whenever $x(S_{p2}^i\cup\{u^i\},R_j)-x(S_{p2}^i,R_j)=1$, which implies $x(S_{p2}^i\cup\{u^i\},R_j)=1$ and $x(S_{p2}^i,R_j)=0$. $x(S_{p2}^i,R_j)=0$ means that $S_{p2}^i\cup R_j=\emptyset$ and $S_{p1}^i\cup R_j=\emptyset$ because of $S_{p1}\subseteq S_{p2}$. Then, $x(S_{p2}^i\cup\{u^i\},R_j)=1$ means that $\{u^i\}\cup R_j\neq\emptyset$,  so $x(S_{p1}^i\cup\{u^i\},R_j)=1$. Therefore, $x(S_{p1}^i\cup\{u^i\},R_j)-x(S_{p1}^i,R_j)=1$ and $W_{\mathcal{R}}(\cdot)$ is submodular, the Lemma is proved.
\end{proof}
\begin{lem}
	If the number of Multi-Samplings $\theta$ in $\mathcal{R}$ of Algorithm \ref{a4} satisfies that $\theta\geq\theta_1$,
	\begin{equation}
	\theta_1=\frac{2nr\bar{w}\cdot \log(1/\delta_1)}{\varepsilon_1^2\cdot {\rm OPT}}
	\end{equation}
	then, $nr\cdot W_{\mathcal{R}}(S_p^*)\geq (1-1/e)(1-\varepsilon_1)\cdot {\rm OPT}$ holds
	with at least $1-\delta_1$ probability.
\end{lem}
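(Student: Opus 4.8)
The plan is to combine the martingale lower-tail bound already established in Inequality~(16) with the greedy approximation guarantee recorded just before Lemma~3. The key observation is that the optimal set $S_p^\circ$ is a \emph{fixed} set that does not depend on the random collection $\mathcal{R}$, so Inequality~(16) can be applied to it directly, with no union bound over candidate sets. Writing $p = \mathrm{OPT}/(nr) = f(S_p^\circ)/(nr)$, Theorem~2 gives $p = \mathbb{E}[W_{\mathcal{R}}(S_p^\circ)]$, and by definition $\theta\cdot W_{\mathcal{R}}(S_p^\circ) = \sum_{j=1}^{\theta}\sum_{i=1}^{r} w^i\cdot x((S_p^\circ)^i,R_j)$, the quantity governed by the martingale $M_1,\dots,M_\theta$ with the parameters $a=1$ and $b = \bar w\,p\theta$ fixed in the discussion preceding this lemma.

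First I would instantiate Inequality~(16) with $S_p = S_p^\circ$. Since the event $\sum_{j=1}^{\theta}\sum_{i=1}^{r} w^i x((S_p^\circ)^i,R_j) - p\theta \leq -\varepsilon_1 p\theta$ is exactly the event $nr\cdot W_{\mathcal{R}}(S_p^\circ) \leq (1-\varepsilon_1)\,\mathrm{OPT}$, we obtain
\begin{equation}
\Pr\big[nr\cdot W_{\mathcal{R}}(S_p^\circ) \leq (1-\varepsilon_1)\,\mathrm{OPT}\big] \leq \exp\bigg(-\frac{\varepsilon_1^2}{2\bar w}\,p\theta\bigg).\nonumber
\end{equation}
Next I would force the right-hand side to be at most $\delta_1$: solving $\exp(-\varepsilon_1^2 p\theta/(2\bar w)) \leq \delta_1$ for $\theta$ yields $\theta \geq \frac{2\bar w \log(1/\delta_1)}{\varepsilon_1^2\,p} = \frac{2nr\bar w\log(1/\delta_1)}{\varepsilon_1^2\,\mathrm{OPT}} = \theta_1$. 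Hence whenever $\theta \geq \theta_1$, with probability at least $1-\delta_1$ we have $nr\cdot W_{\mathcal{R}}(S_p^\circ) \geq (1-\varepsilon_1)\,\mathrm{OPT}$.

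Finally, I would invoke the greedy guarantee: because $W_{\mathcal{R}}(\cdot)$ is monotone non-decreasing and submodular (Lemma~3), Algorithm~4 returns $S_p^*$ with $W_{\mathcal{R}}(S_p^*) \geq (1-1/e)\,W_{\mathcal{R}}(S_p^\circ)$, using that $S_p^\circ$ is itself a feasible size-$\leq k$ subset of $V\setminus S_r$. Multiplying by $nr$ and chaining with the concentration bound of the previous paragraph gives $nr\cdot W_{\mathcal{R}}(S_p^*) \geq (1-1/e)\,nr\cdot W_{\mathcal{R}}(S_p^\circ) \geq (1-1/e)(1-\varepsilon_1)\,\mathrm{OPT}$ with probability at least $1-\delta_1$, as claimed. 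The ``hard part'' here is really only bookkeeping: correctly identifying the sum $\theta W_{\mathcal{R}}(S_p^\circ)$ with the martingale increment sum, carrying the parameters $a=1$, $b=\bar w p\theta$ through Inequality~(16), and checking the algebra that converts the exponential tail into the stated closed form for $\theta_1$; no union bound is needed because only the single deterministic set $S_p^\circ$ is analyzed on the lower-tail side.
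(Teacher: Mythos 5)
Your proposal is correct and follows essentially the same route as the paper's proof in Appendix~A: apply the lower-tail bound of Inequality~(16) to the fixed optimal set $S_p^\circ$ (so no union bound is needed), choose $\theta_1$ so that the exponential tail equals $\delta_1$, and then chain with the $(1-1/e)$ greedy guarantee from Lemma~3. The only cosmetic difference is that you solve the tail inequality for $\theta$ to derive $\theta_1$, whereas the paper substitutes $\theta\geq\theta_1$ and verifies the bound equals $\delta_1$.
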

\begin{proof}
	See Appendix A.
\end{proof}
\begin{lem}
	If the number of Multi-Samplings $\theta$ in $\mathcal{R}$ of Algorithm \ref{a4} satisfies that $\theta\geq\theta_2$,
	\begin{equation}
	\theta_2=\frac{(2\bar{w}+\frac{2}{3}\varepsilon_2)nr\cdot\log\big(\binom{n-n_r}{k}/\delta_2\big)}{\varepsilon_2^2\cdot {\rm OPT}}
	\end{equation}
	then, $nr\cdot W_{\mathcal{R}}(S_p^*)-f(S_p^*)\leq\varepsilon_2\cdot {\rm OPT}$ holds
	with at least $1-\delta_2$ probability, where $n_r=|S_r|$.
\end{lem}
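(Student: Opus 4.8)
The plan is to bound, uniformly over every admissible positive seed set of size $k$, the gap between the scaled estimator $nr\cdot W_{\mathcal{R}}(\cdot)$ and the true objective $f(\cdot)$, and only then specialize to $S_p^*$. A uniform bound is unavoidable here because $S_p^*$ is the output of the greedy procedure in Algorithm \ref{a4} and therefore depends on the random collection $\mathcal{R}$, so a concentration inequality cannot be applied to $S_p^*$ directly. There are $\binom{n-n_r}{k}$ candidate size-$k$ subsets of $V(G)\setminus S_r$, so a union bound over all of them costs a factor $\binom{n-n_r}{k}$, which is precisely the term inside the logarithm of Equation (20).

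First I would fix an arbitrary $S_p\subseteq V\backslash S_r$ with $|S_p|=k$ and, in the non-trivial case, $f(S_p)>0$; put $p=f(S_p)/(nr)$. Using Equation (10), $nr\cdot W_{\mathcal{R}}(S_p)-f(S_p)=\frac{nr}{\theta}\big(\sum_{j=1}^{\theta}\sum_{i=1}^{r}w^i x(S_p^i,R_j)-p\theta\big)$, so the event $\{nr\cdot W_{\mathcal{R}}(S_p)-f(S_p)\ge\varepsilon_2\cdot{\rm OPT}\}$ coincides with $\{\sum_j\sum_i w^i x(S_p^i,R_j)-p\theta\ge\varepsilon' p\theta\}$ where $\varepsilon'=\varepsilon_2\cdot{\rm OPT}/f(S_p)$. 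Since $f(S_p)\le{\rm OPT}$, we have $\varepsilon'\ge\varepsilon_2$. Applying the upper-tail martingale inequality, Equation (17), with $\varepsilon'$ in place of $\varepsilon$ bounds this probability by $\exp\!\big(-\tfrac{(\varepsilon')^2}{2\bar w+\frac23\varepsilon'}p\theta\big)$. I would then rewrite the exponent as $\tfrac{\varepsilon'}{2\bar w+\frac23\varepsilon'}\cdot(\varepsilon' p)\theta$, observe $\varepsilon' p=\varepsilon_2\cdot{\rm OPT}/(nr)$, and use that $t\mapsto t/(2\bar w+\frac23 t)$ is increasing on $t>0$, so from $\varepsilon'\ge\varepsilon_2$ the exponent is at least $\tfrac{\varepsilon_2^2\cdot{\rm OPT}\cdot\theta}{(2\bar w+\frac23\varepsilon_2)nr}$. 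For $\theta\ge\theta_2$ this quantity is at least $\log\!\big(\binom{n-n_r}{k}/\delta_2\big)$, so the bad event for this fixed $S_p$ has probability at most $\delta_2/\binom{n-n_r}{k}$.

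A union bound over all $\binom{n-n_r}{k}$ size-$k$ subsets (taking $|S_p^*|=k$ without loss of generality, since padding $S_p^*$ with extra nodes only increases both sides of the inequality) then gives that, with probability at least $1-\delta_2$, $nr\cdot W_{\mathcal{R}}(S_p)-f(S_p)\le\varepsilon_2\cdot{\rm OPT}$ holds simultaneously for every such $S_p$, and in particular for $S_p^*$. The degenerate case $f(S_p)=0$ is dispatched separately: then $f^i(S_p^i)=0$ for every layer $i$, which by Lemma 2 forces $x(S_p^i,R_j)=0$ almost surely, hence $W_{\mathcal{R}}(S_p)=0=f(S_p)$ and the claim is immediate.

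The step that I expect to demand the most care — rather than the routine Chernoff-style plugging in — is the reduction of an ${\rm OPT}$-relative deviation to something Equation (17) can actually absorb: the tail bound is phrased relative to $p=f(S_p)/(nr)$, whereas the target accuracy is relative to ${\rm OPT}$, and the conversion introduces the set-dependent ratio $\varepsilon'=\varepsilon_2\,{\rm OPT}/f(S_p)$ that must be controlled uniformly over all $S_p$. Monotonicity of $t\mapsto t/(2\bar w+\frac23 t)$ together with $f(S_p)\le{\rm OPT}$ is exactly what closes this gap, and one must also keep in mind the data-dependence of $S_p^*$, which is what necessitates the union bound in the first place.
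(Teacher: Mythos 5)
Your proposal is correct and follows essentially the same route as the paper's Appendix B: fix an arbitrary size-$k$ set, apply the upper-tail martingale bound (Equation (17)) with the set-dependent deviation $\zeta=\varepsilon_2\,{\rm OPT}/f(S_p)$, use $f(S_p)\le{\rm OPT}$ to reduce the exponent to $\varepsilon_2^2\,{\rm OPT}\,\theta/((2\bar w+\tfrac23\varepsilon_2)nr)$, and take a union bound over the $\binom{n-n_r}{k}$ candidate sets to cover the data-dependence of $S_p^*$. Your explicit treatment of the $f(S_p)=0$ case and your monotonicity argument for $t\mapsto t/(2\bar w+\tfrac23 t)$ are just slightly more careful packagings of the same algebra.
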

\begin{proof}
See Appendix B.
\end{proof}
\begin{thm}
	Given any $\varepsilon_1<\varepsilon$, $\varepsilon_2=\varepsilon-(1-1/e)\varepsilon_1$ and $\delta_1,\delta_2\in(0,1)$ with $\delta_1+\delta_2\leq1/n^\ell$, if the number of Multi-Samplings $\theta$ in $\mathcal{R}$ of Algorithm \ref{a4} satisfies that $\theta\geq\max\{\theta_1,\theta_2\}$, it returns a $(1-1/e-\varepsilon)$-approximate solution of MFRB problem with at least $1-1/n^\ell$ probability.
\end{thm}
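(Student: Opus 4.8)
The plan is to derive the bound by combining Lemma 6 and Lemma 7 on a single collection $\mathcal{R}$ of Multi-Samplings and then applying a union bound, followed by a one-line algebraic simplification that exploits the prescribed value of $\varepsilon_2$.

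First I would fix $\mathcal{R}=\{R_1,\dots,R_\theta\}$ with $\theta\geq\max\{\theta_1,\theta_2\}$ and let $S_p^*$ be the set returned by Algorithm 4 on this $\mathcal{R}$. Since $\theta\geq\theta_1$, Lemma 6 says that the event
\[
\mathcal{E}_1:\quad nr\cdot W_{\mathcal{R}}(S_p^*)\geq (1-1/e)(1-\varepsilon_1)\cdot{\rm OPT}
\]
holds with probability at least $1-\delta_1$; since $\theta\geq\theta_2$, Lemma 7 says that the event
\[
\mathcal{E}_2:\quad nr\cdot W_{\mathcal{R}}(S_p^*)-f(S_p^*)\leq \varepsilon_2\cdot{\rm OPT}
\]
holds with probability at least $1-\delta_2$. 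Because $\mathcal{E}_1$ and $\mathcal{E}_2$ are statements about the same random object $S_p^*$ (equivalently, about the same sample $\mathcal{R}$), the union bound gives $\Pr[\mathcal{E}_1\cap\mathcal{E}_2]\geq 1-\delta_1-\delta_2\geq 1-1/n^\ell$, using the hypothesis $\delta_1+\delta_2\leq 1/n^\ell$.

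On the event $\mathcal{E}_1\cap\mathcal{E}_2$ I would chain the two inequalities: from $\mathcal{E}_2$, $f(S_p^*)\geq nr\cdot W_{\mathcal{R}}(S_p^*)-\varepsilon_2\cdot{\rm OPT}$, and then substituting the lower bound from $\mathcal{E}_1$,
\[
f(S_p^*)\geq\big[(1-1/e)(1-\varepsilon_1)-\varepsilon_2\big]\cdot{\rm OPT}.
\]
Plugging in $\varepsilon_2=\varepsilon-(1-1/e)\varepsilon_1$ collapses the bracketed coefficient to $(1-1/e)-\varepsilon$, so $f(S_p^*)\geq(1-1/e-\varepsilon)\cdot{\rm OPT}$, which is precisely the claimed approximation guarantee, and it holds with probability at least $1-1/n^\ell$.

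The argument is essentially bookkeeping, so I do not expect a serious obstacle; the two points that need a little care are (i) coupling $\mathcal{E}_1$ and $\mathcal{E}_2$ on the same $\mathcal{R}$ so the union bound is legitimate (Lemmas 6 and 7 must be read as conditioning on the same sampled collection), and (ii) verifying that the constraints $\varepsilon_1<\varepsilon$ and $\varepsilon_2=\varepsilon-(1-1/e)\varepsilon_1$ force $\varepsilon_2>0$, so that $\theta_2$ in Lemma 7 is a well-defined positive quantity and the statement is non-vacuous. No concentration estimate beyond Lemmas 6 and 7 is required.
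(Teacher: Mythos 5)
Your proposal is correct and follows essentially the same route as the paper: a union bound over the two events from the $\theta_1$- and $\theta_2$-lemmas (which you label Lemma 6 and Lemma 7, but whose statements you reproduce exactly), followed by chaining the inequalities and substituting $\varepsilon_2=\varepsilon-(1-1/e)\varepsilon_1$. Your explicit use of the union bound is in fact slightly cleaner than the paper's $(1-\delta_1)(1-\delta_2)$ phrasing, which tacitly suggests independence that is not needed.
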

\begin{proof}
	By Lemma 4 and Lemma 5, they hold with $(1-\delta_1)(1-\delta_2)>1-(\delta_1+\delta_2)\geq 1-1/n^\ell$ probability. Then, $f(S_p^*)\geq nr\cdot W_{\mathcal{R}}(S_p^*)-\varepsilon_2\cdot {\rm OPT}\geq (1-1/e)(1-\varepsilon_1)\cdot {\rm OPT}-\varepsilon_2\cdot {\rm OPT}=(1-1/e-((1-1/e)\varepsilon_1+\varepsilon_2))\cdot {\rm OPT}=(1-1/e-\varepsilon)\cdot {\rm OPT}$. The Theorem is proved.
\end{proof}

From Theorem 3, we need to compute $\theta\geq\max\{\theta_1,\theta_2\}$ and ensure $\mathcal{R}$ contains at least $\theta$ Multi-Samplings. In order to derive such a $\theta$, which is feasible to find the minimum $\theta$. Here, we set $\delta_1=\delta_2=1/(2n^\ell)$ and $\varepsilon_1=\varepsilon_2=\varepsilon/(2-1/e)$ such that $\varepsilon_2=\varepsilon-(1-1/e)\varepsilon_1$. We define $\lambda^*$ as
\begin{equation}
\lambda^*=\frac{2nr\bar{w}\left(2-\frac{1}{e}\right)\left(2-\frac{1}{e}+\frac{\varepsilon}{3\bar{w}}\right)\left(\log\left(\tbinom{n-n_r}{k}\cdot 2n^\ell\right)\right)}{\varepsilon^2}
\end{equation}
and $\theta^*=\lambda^*/{\rm OPT}$. We can verify $\theta^*\geq\max\{\theta_1,\theta_2\}$ easily. However, it is difficult to compute the value of ${\rm OPT}$ directly. In the next subsection, we will find a lower bound ${\rm LB}$ of optimal value instead of ${\rm OPT}$ and determine the number of Multi-Samplings in $\mathcal{R}$ by $\lambda^*/{\rm LB}$.

\begin{algorithm}[!t]
	\caption{\textbf{Sampling $(G,k,r,\varepsilon,\ell)$}}\label{a5}
	\begin{algorithmic}[1]
		\renewcommand{\algorithmicrequire}{\textbf{Input:}}
		\renewcommand{\algorithmicensure}{\textbf{Output:}}
		\REQUIRE $G=(V,E)$, parameters $k$, $r$, $\varepsilon$ and $\ell$
		\ENSURE A collection $\mathcal{R}$
		\STATE Initialize $\mathcal{R}=\emptyset$, ${\rm LB}=1$, $\varepsilon'=\sqrt{2}\varepsilon$
		\STATE Initialize $\mathcal{R}'=\emptyset$
		\STATE $\lambda'=nr\left(2\bar{w}+\frac{2}{3}\varepsilon'\right)\left(\log\binom{n-n_r}{k}/\delta_3\right)\varepsilon'^{-2}$
		\STATE $\lambda^*=2nr\bar{w}\left(2-\frac{1}{e}\right)\left(2-\frac{1}{e}+\frac{\varepsilon}{3\bar{w}}\right)\left(\log\left(\tbinom{n-n_r}{k}\cdot 2n^\ell\right)\right)\varepsilon^{-2}$
		\FOR {$i=1$ to $\log_2(nr)-1$}
		\STATE $x_i=nr\cdot2^{-i}$
		\STATE $\theta_i=\lambda'/x_i$
		\WHILE {$|\mathcal{R}|\leq\theta_i$}
		\STATE $R\leftarrow$ Multi-Sampling $(G,S_r)$
		\STATE $\mathcal{R}=\mathcal{R}\cup R$
		\ENDWHILE
		\STATE $\{S_i, W_{\mathcal{R}}(S_i)\}\leftarrow$ NodeSelection $(\mathcal{R},k)$
		\IF {$nr\cdot W_{\mathcal{R}}(S_i)\geq(1+\varepsilon')\cdot x_i$}
		\STATE ${\rm LB}=nr\cdot W_{\mathcal{R}}(S_i)/(1+\varepsilon')$
		\STATE break
		\ENDIF
		\ENDFOR
		\STATE $\theta\leftarrow\lambda^*/{\rm LB}$
		\WHILE {$|\mathcal{R}'|\leq\theta$}
		\STATE $R\leftarrow$ Multi-Sampling $(G,S_r)$
		\STATE $\mathcal{R}'=\mathcal{R}'\cup R$
		\ENDWHILE
		\STATE Return $\mathcal{R}'$
	\end{algorithmic}
\end{algorithm}

\subsection{Sampling Multi-Sampling}
In last subsection, we have obtained the approximate minimum value of $\theta$. Inspired by the idea of IMM algorithm \cite{tang2015influence}, we aim to make the difference between LB and {\rm OPT} as close as possible. The process of Sampling Multi-Sampling stage is shown in Algorithm 5. In iteration $i$, we generate a certain number of Multi-Samplings, put them into $\mathcal{R}$ and call Algorithm \ref{a4}, then compare this result $W_{\mathcal{R}}(S_i)$ with statistical test $(1+\varepsilon')\cdot x_i$. When the LB is close to {\rm OPT} enough, it terminates the for-loop with a high probability. Obviously, the Multi-Samplings generated by Algorithm \ref{a5} are not independent, because those Multi-Samplings generated in $i^{th}$ iteration are determined by whether the size of collection $\mathcal{R}$ in $(i-1)^{th}$ iteration is large enough to make the estimation accurate. It can be analyzed by use of martingale technique, which is shown as Lemma 7 and Lemma 8. Finally, we generate a new collection of Multi-Samplings, and we will explain why we need to do that later.

\begin{lem}
	Consider the $i^{th}$ iteration in Algorithm 5, if the number of Multi-Samplings $\theta_i$ in $\mathcal{R}$ satisfies
	\begin{equation}
	\theta_i\geq\frac{\left(2\bar{w}+\frac{2}{3}\varepsilon'\right)nr\cdot\left(\log\binom{n-n_r}{k}/\delta_3\right)}{\varepsilon'^{2}\cdot x_i}
	\end{equation}
	If ${\rm OPT}<x_i$, then $nr\cdot W_\mathcal{R}(S_i)<(1+\varepsilon')\cdot x_i$ holds with at least $1-\delta_3$ probability.
\end{lem}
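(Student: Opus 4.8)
The plan is to carry out the standard reverse-sampling concentration argument, just as in Lemma~5: fix a candidate positive seed set, bound the overestimate of $nr\cdot W_{\mathcal{R}}(\cdot)$ on it via the martingale tail inequality, and then union-bound over all feasible seed sets so that the conclusion holds simultaneously for the particular set $S_i$ output by NodeSelection$(\mathcal{R},k)$. Everything takes place inside the $i$th iteration of Algorithm~5, where $\mathcal{R}=\{R_1,\dots,R_{\theta_i}\}$ consists of i.i.d.\ Multi-Samplings and $\theta_i=\lambda'/x_i$ is deterministic, so within this iteration the sample count is fixed and no stopping-time issue arises; by the discussion preceding (13)--(17), for any fixed seed set $S_p$ the partial sums $M_1,\dots,M_{\theta_i}$ of (12) form a martingale with $a=1$.

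First I fix an arbitrary feasible set $S\subseteq V\setminus S_r$, $|S|\le k$. Under the hypothesis ${\rm OPT}<x_i$ we have $f(S)\le{\rm OPT}<x_i$. Write $p=f(S)/(nr)$ and $M_{\theta_i}=\sum_{j=1}^{\theta_i}\sum_{\ell=1}^{r}w^{\ell}x(S^{\ell},R_j)-p\theta_i$. By (10), the event $\{\,nr\cdot W_{\mathcal{R}}(S)\ge(1+\varepsilon')x_i\,\}$ equals $\{\,M_{\theta_i}\ge(1+\varepsilon')\frac{x_i\theta_i}{nr}-p\theta_i\,\}$; since $p\theta_i=\frac{f(S)\theta_i}{nr}<\frac{x_i\theta_i}{nr}$, the right-hand threshold exceeds $\varepsilon'\frac{x_i\theta_i}{nr}$, so this event is contained in $\{\,M_{\theta_i}\ge\varepsilon'\frac{x_i\theta_i}{nr}\,\}$. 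I then apply the upper-tail bound (14) with $a=1$, $\gamma=\varepsilon'\frac{x_i\theta_i}{nr}$, and variance proxy $b=\bar{w}\frac{x_i\theta_i}{nr}$; this $b$ is admissible because the conditional-variance sum of $(M_k)$, computed exactly as in the derivation of (15), is at most $\bar{w}p\theta_i=\bar{w}\frac{f(S)\theta_i}{nr}\le b$. Hence
\[ \Pr\big[\,nr\cdot W_{\mathcal{R}}(S)\ge(1+\varepsilon')x_i\,\big]\;\le\;\exp\!\Big(-\frac{\gamma^{2}}{\frac{2}{3}\gamma+2b}\Big)\;=\;\exp\!\Big(-\frac{\varepsilon'^{2}\,x_i\theta_i/(nr)}{2\bar{w}+\frac{2}{3}\varepsilon'}\Big). \]

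Substituting the hypothesis (22), i.e.\ $\theta_i\ge(2\bar{w}+\frac{2}{3}\varepsilon')nr\log\!\big(\binom{n-n_r}{k}/\delta_3\big)/(\varepsilon'^{2}x_i)$, shows the exponent is at most $-\log\!\big(\binom{n-n_r}{k}/\delta_3\big)$, so $\Pr[\,nr\cdot W_{\mathcal{R}}(S)\ge(1+\varepsilon')x_i\,]\le\delta_3/\binom{n-n_r}{k}$ for every fixed feasible $S$. Since any feasible set of size $<k$ sits inside some feasible set of size exactly $k$ (assuming $n-n_r\ge k$) and $W_{\mathcal{R}}$ is monotone by Lemma~6, it suffices to union-bound over the $\binom{n-n_r}{k}$ size-$k$ subsets of $V\setminus S_r$, which gives failure probability at most $\delta_3$. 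On the complementary event $nr\cdot W_{\mathcal{R}}(S)<(1+\varepsilon')x_i$ holds for all feasible $S$, and in particular for the set $S_i$ returned by NodeSelection (which has size at most $k$), establishing the lemma.

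The only place requiring care is the choice of variance proxy. One cannot simply rescale $\varepsilon$ in the relative bound (17): the natural rescaling $\varepsilon^{\dagger}=(1+\varepsilon')x_i/f(S)-1$ can be arbitrarily large when $f(S)$ is small, which blows up the denominator $2\bar{w}+\frac{2}{3}\varepsilon^{\dagger}$ and kills the estimate. Passing through (14) directly and replacing the $f(S)$-dependent variance bound $\bar{w}f(S)\theta_i/(nr)$ by the $f(S)$-free bound $\bar{w}x_i\theta_i/(nr)$ (equivalently, noting the tail bound is worst when the mean is as large as allowed, $f(S)=x_i$) is what makes the estimate uniform over all feasible $S$; the rest is the same Chernoff-style bookkeeping already done in Lemma~5.
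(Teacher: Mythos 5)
Your proof is correct and follows essentially the same route as the paper's Appendix C: fix one candidate seed set, bound the overshoot event $nr\cdot W_{\mathcal{R}}(S)\geq(1+\varepsilon')x_i$ via the martingale upper-tail inequality to get per-set failure probability $\delta_3/\binom{n-n_r}{k}$, and union-bound over all feasible size-$k$ sets so the conclusion holds for the random output $S_i$. The only (immaterial) difference is that you instantiate the additive bound (14) with the inflated variance proxy $b=\bar{w}x_i\theta_i/(nr)$, whereas the paper rescales the relative bound (17) to $\zeta=(1+\varepsilon')x_i/(pnr)-1$ and uses $\zeta>\varepsilon' x_i/(pnr)>\varepsilon'$ --- and, contrary to your closing remark, that rescaling does go through, because $\zeta\mapsto\zeta^2/(2\bar{w}+\tfrac{2}{3}\zeta)$ grows at least linearly in $\zeta$, so the shrinking factor $p\theta_i$ is exactly compensated.
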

\begin{proof}
See Appendix C.
\end{proof}
\begin{lem}
	Consider the $i^{th}$ iteration in Algorithm \ref{a5}, if ${\rm OPT}\geq x_i$, then ${\rm OPT}\geq nr\cdot W_{\mathcal{R}}(S_i)/(1+\varepsilon')$ holds with at least $1-\delta_3$ probability.
\end{lem}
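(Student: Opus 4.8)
The plan is to follow the template of the proof of Lemma 5 (Appendix B), which already controls the estimator $nr\cdot W_{\mathcal{R}}(\cdot)$ from above, but now with the \emph{known} quantity $x_i$ playing the role of a lower bound for $\mathrm{OPT}$, and with extra care taken because $S_i$ is itself the output of \textsc{NodeSelection} run on the random collection $\mathcal{R}$, so concentration cannot be invoked for $S_i$ directly. Concretely, I would prove the stronger statement that, with probability at least $1-\delta_3$, \emph{every} size-$k$ set $S\subseteq V\backslash S_r$ satisfies $nr\cdot W_{\mathcal{R}}(S)<(1+\varepsilon')\cdot\mathrm{OPT}$. Since the set $S_i$ returned in the $i^{th}$ iteration is one such set, rearranging $nr\cdot W_{\mathcal{R}}(S_i)<(1+\varepsilon')\mathrm{OPT}$ then yields $\mathrm{OPT}\ge nr\cdot W_{\mathcal{R}}(S_i)/(1+\varepsilon')$.

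First I would fix an arbitrary size-$k$ set $S\subseteq V\backslash S_r$ and put $p=f(S)/nr$, so that $\sum_{j=1}^{\theta_i}\sum_{t=1}^{r}w^t\,x(S^t,R_j)$ (here $S^t$ denotes the layer-$t$ copies of $S$) is a sum of i.i.d.\ bounded terms of mean $p\theta_i$ whose partial sums form the martingale attached to $S$ as in (12). Because $f(S)\le\mathrm{OPT}$, the event $\{nr\cdot W_{\mathcal{R}}(S)\ge(1+\varepsilon')\mathrm{OPT}\}$ forces $\sum_{j}\sum_{t}w^t x(S^t,R_j)-p\theta_i\ge\theta_i\varepsilon'\mathrm{OPT}/(nr)$, i.e.\ $M_{\theta_i}\ge\gamma$ with $\gamma=\theta_i\varepsilon'\mathrm{OPT}/(nr)$. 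Applying the upper-tail martingale bound of Lemma 3, equation (14) (equivalently, its specialization (17)) with $a=1$ and variance proxy $b=\bar w\,p\theta_i\le\bar w\,\mathrm{OPT}\,\theta_i/(nr)$ gives an exponent at least $\frac{\gamma^2}{\frac23\gamma+2b}\ge\frac{\varepsilon'^{2}\,\mathrm{OPT}\,\theta_i}{nr(2\bar w+\frac23\varepsilon')}$; then substituting $\mathrm{OPT}\ge x_i$ and the bound (22) on $\theta_i$ (which $\theta_i=\lambda'/x_i$ meets with equality) makes this at least $\log\big(\tbinom{n-n_r}{k}/\delta_3\big)$, so $\Pr[nr\cdot W_{\mathcal{R}}(S)\ge(1+\varepsilon')\mathrm{OPT}]\le\delta_3/\binom{n-n_r}{k}$. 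A union bound over the $\binom{n-n_r}{k}$ size-$k$ subsets of $V\backslash S_r$ then gives the uniform statement with probability at least $1-\delta_3$, which combined with the reduction above finishes the proof.

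The step I expect to be the main (though essentially bookkeeping) obstacle is legitimizing the union bound: since $S_i$ depends on the realized samples, the concentration has to be paid for over \emph{all} candidate size-$k$ sets, which is exactly why the factor $\binom{n-n_r}{k}$ is hard-wired into $\lambda'$ and hence into $\theta_i$. A second point to get right is that the deviation we actually need is the absolute quantity $\varepsilon'\mathrm{OPT}$ rather than the multiplicative $\varepsilon' f(S)$ that appears in (17); this is handled by going back to (14) with the explicit $\gamma$ above and bounding the variance proxy uniformly via $f(S)\le\mathrm{OPT}$, which is precisely where the hypotheses $\mathrm{OPT}\ge x_i$ and the choice $\theta_i=\lambda'/x_i$ are consumed. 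Finally I would note the minor technicality that \textsc{NodeSelection} returns a set of size exactly $k$ (or, if it ever stalls earlier, monotonicity of $W_{\mathcal{R}}$ lets us enlarge it to size $k$ without decreasing its value), so $S_i$ genuinely lies in the family covered by the union bound.
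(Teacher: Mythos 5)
Your proposal is correct and follows essentially the same route as the paper's own proof: fix an arbitrary size-$k$ set, apply the upper-tail martingale bound with the absolute deviation $\varepsilon'\cdot{\rm OPT}$ (using $f(S)\le{\rm OPT}$ to control the mean and the variance proxy, and ${\rm OPT}\ge x_i$ to lower-bound $\theta_i=\lambda'/x_i$), obtain a per-set failure probability of $\delta_3/\binom{n-n_r}{k}$, and union bound over all $\binom{n-n_r}{k}$ candidate sets. The only difference is presentational: you make explicit why the union bound is needed (the dependence of $S_i$ on $\mathcal{R}$) and work directly from the tail bound (14) with an explicit $\gamma$, while the paper packages the same deviation as $\zeta=\varepsilon'\cdot{\rm OPT}/(pnr)$ inside inequality (17).
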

\begin{proof}
See Appendix D.
\end{proof}
\begin{thm}
	Given $\delta_3=1/(n^\ell\cdot\log_2(nr))$, the number of Multi-Samplings $|\mathcal{R}|$ returned by Algorithm \ref{a5} satisfies $|\mathcal{R}|\geq\theta^*$ with at least $1-1/n^\ell$ probability.
\end{thm}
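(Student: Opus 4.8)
The plan is to reduce everything to showing that the quantity ${\rm LB}$ computed in Algorithm~\ref{a5} satisfies ${\rm LB}\le{\rm OPT}$ with probability at least $1-1/n^\ell$. Granting this, $\theta=\lambda^*/{\rm LB}\ge\lambda^*/{\rm OPT}=\theta^*$, and since the final \textbf{while}-loop keeps drawing Multi-Samplings until $|\mathcal{R}'|>\theta$, the returned collection obeys $|\mathcal{R}'|\ge\theta\ge\theta^*$. Before the probabilistic part I would record the deterministic fact ${\rm OPT}\ge 1$: for any single user $v\in V\setminus S_r$, the seed set $\{v\}$ is fully $C_p$-active, so every feature node $v^i$ accepts $C_p$ at time $0$, strictly earlier than any rumor path from $S_r^i$ could reach it; hence $r(v^i)=0$ for all $i$, $\sum_{i=1}^{r}w^i(1-r(v^i))=1\ge\theta_v$, so $v$ is $\bar{C_r}$-active and $f(\{v\})\ge 1$. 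This already handles the branch in which the \textbf{for}-loop never breaks and ${\rm LB}$ keeps its initial value $1\le{\rm OPT}$.

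Next I would verify that Lemma~7 and Lemma~8 are applicable at every iteration. In iteration $i$ the loop enlarges $\mathcal{R}$ until $|\mathcal{R}|\ge\theta_i=\lambda'/x_i$, and since $\lambda'=nr\left(2\bar{w}+\frac{2}{3}\varepsilon'\right)\left(\log\binom{n-n_r}{k}/\delta_3\right)\varepsilon'^{-2}$, this value of $\theta_i$ is exactly the threshold required in the hypotheses of Lemma~7 and Lemma~8; hence both lemmas are in force at each of the $\log_2(nr)-1$ iterations, each with failure probability at most $\delta_3$.

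The core is then a per-iteration bad event together with a union bound. For iteration $i$ define $F_i$ as: if $x_i>{\rm OPT}$, the event that the loop reaches iteration $i$ and the test $nr\cdot W_{\mathcal{R}}(S_i)\ge(1+\varepsilon')\cdot x_i$ succeeds; if $x_i\le{\rm OPT}$, the event that the loop reaches iteration $i$, the test succeeds, and yet $nr\cdot W_{\mathcal{R}}(S_i)/(1+\varepsilon')>{\rm OPT}$. By Lemma~7 in the first case and Lemma~8 in the second, and because the probability of reaching iteration $i$ is at most $1$, we get $\Pr[F_i]\le\delta_3$ in both cases. On $\bigcap_i F_i^{c}$ one has ${\rm LB}\le{\rm OPT}$: if the loop never breaks then ${\rm LB}=1\le{\rm OPT}$; if it breaks at iteration $i_0$, then $F_{i_0}^{c}$ forbids $x_{i_0}>{\rm OPT}$ (a successful test would otherwise place us in $F_{i_0}$), so $x_{i_0}\le{\rm OPT}$, and then $F_{i_0}^{c}$ forces ${\rm LB}=nr\cdot W_{\mathcal{R}}(S_{i_0})/(1+\varepsilon')\le{\rm OPT}$. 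A union bound gives $\Pr\!\left[\bigcup_i F_i\right]\le(\log_2(nr)-1)\,\delta_3<\log_2(nr)\cdot\delta_3=1/n^\ell$, so ${\rm LB}\le{\rm OPT}$, and therefore $|\mathcal{R}'|\ge\theta\ge\theta^*$, with probability at least $1-1/n^\ell$.

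The delicate point is not the union bound but the right to invoke Lemma~7 and Lemma~8 \emph{inside} the adaptive loop: the Multi-Samplings produced across iterations are not independent, since whether iteration $i$ runs at all, and how many samples it contributes, is governed by the estimates from earlier iterations. This is precisely what the martingale $\{M_k\}$ and the Chung--Lu inequality (Lemma~3) are designed to absorb, and Lemmas~7 and~8 (Appendices~C and~D) are phrased so that their conclusions hold conditionally on the loop reaching iteration $i$; once that is taken as given, the accounting above proceeds without any further probabilistic subtlety.
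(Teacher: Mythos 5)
Your skeleton is the right one: reducing the claim to ${\rm LB}\le{\rm OPT}$, noting ${\rm OPT}\ge 1$ to cover the branch where the for-loop never breaks, taking a union bound over the $\log_2(nr)-1$ iterations with per-iteration failure probability $\delta_3$, and concluding $|\mathcal{R}'|\ge\theta=\lambda^*/{\rm LB}\ge\lambda^*/{\rm OPT}=\theta^*$. The gap is in the step $\Pr[F_i]\le\delta_3$. You justify it by asserting that Lemma~7 and Lemma~8 ``are phrased so that their conclusions hold conditionally on the loop reaching iteration $i$.'' They are not: their proofs (Appendices C and D) apply the unconditional martingale tail bound of Lemma~3 to a collection of exactly $\theta_i$ Multi-Samplings and nowhere condition on the event that every earlier statistical test failed. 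That event is a function of the first $\theta_{i-1}$ samples, which are \emph{retained} in $\mathcal{R}$ when iteration $i$ runs; conditioning on it biases their distribution, so the conditional tail probability of $W_{\mathcal{R}}(S_i)$ is not controlled by the lemmas as stated. This is precisely the objection the paper itself raises: its ``proof'' of this theorem consists of quoting Chen's observation that the statement \emph{cannot} be obtained by directly combining Lemma~7 and Lemma~8 --- because entering iteration $i$ means the earlier samples failed the termination test, making them biased --- and then deferring entirely to the corrected analysis in the appendix of \cite{2018arXiv180809363C}. Your argument is exactly that direct combination.

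The standard repair (and the content of the cited correction) is to make the per-iteration bad events unconditional: imagine pre-generating an infinite stream $R_1,R_2,\dots$ of Multi-Samplings independently of the control flow, and define $B_i$ as the event that NodeSelection applied to the prefix of length $\theta_i$ violates the conclusion of Lemma~7 (resp.\ Lemma~8). Each $B_i$ concerns a fixed-length prefix of a non-adaptively generated stream, so the martingale bound applies and $\Pr[B_i]\le\delta_3$ unconditionally; since the algorithm's actual failure at iteration $i$ is contained in $B_i$, the union bound then goes through. Your containment $F_i\subseteq\{\text{reach }i\}\cap\{\text{bad test}\}$ has the right shape, but you must define ``bad test'' on the whole sample space rather than conditionally on reaching iteration $i$; as written, the phrase ``because the probability of reaching iteration $i$ is at most $1$'' reads as bounding $\Pr[F_i]$ by $\Pr[\text{bad}\mid\text{reach }i]$, which is exactly the uncontrolled quantity. (Note also that the regeneration of $\mathcal{R}'$ in lines 19--22 of Algorithm~5 fixes a \emph{separate} instance of the same adaptivity problem --- the reuse of $\mathcal{R}$ in the final node selection for Theorem~5 --- and does not by itself repair the estimation of ${\rm LB}$.)
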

\begin{proof}
	In \cite{2018arXiv180809363C}, Chen pointed out this theorem cannot be obtained directly by combining Lemma 7 and Lemma 8. The multi-Samplings generated in $i^{th}$ iteration are biased samples, because of the fact that the algorithm enters the $i^{th}$ iteration means that the size of collection $\mathcal{R}$ in $(i-1)^{th}$ iteration cannot satisfy the termination condition. The complete proof is in the appendix of \cite{2018arXiv180809363C}. Based on that, Theorem 4 is established.
\end{proof}

\begin{algorithm}[!t]
	\caption{\textbf{Revised-IMM $(G,k,r,\varepsilon,\ell)$}}\label{a6}
	\begin{algorithmic}[1]
		\renewcommand{\algorithmicrequire}{\textbf{Input:}}
		\renewcommand{\algorithmicensure}{\textbf{Output:}}
		\REQUIRE $G=(V,E)$, parameters $k$, $r$, $\varepsilon$ and $\ell$
		\ENSURE $\{S_p^*$, $W_{\mathcal{R}}(S_p^*)\}$
		\STATE $\mathcal{R}\leftarrow$ Sampling($G$, $k$, $r$, $\varepsilon$, $\ell$)
		\STATE $\{S_p^*, W_{\mathcal{R}}(S_p^*)\}\leftarrow$ NodeSelection($\mathcal{R}, k$)
		\STATE Return $\{S_p^*$, $W_{\mathcal{R}}(S_p^*)\}$
	\end{algorithmic}
\end{algorithm}

\subsection{Time Complexity}
In the rest of this section, we discuss the time complexity of Algorithm \ref{a5}. We can observe that the computational cost of Algorithm \ref{a5} mainly concentrates on the generation of Multi-Sampling. First, we need to analyze the time of generating a Multi-Sampling. At the high level, we use breath-first search from a feature node to visit each of its incoming neighbors until reaching a rumor node. Thus, the expected time needed to generate a Multi-Sampling is $\mathbb{E}[w(R)]$, where $w(R)$ denotes the number of edges in $G$ that are incoming edges to the nodes in $R$.

\begin{lem}
	Considering the objective function $f^i(\cdot)$ defined as Equation (7), we have
	\begin{equation}
	\mathbb{E}[w(R)]=\frac{m\cdot\sum_{i=1}^{r}{\rm OPT}^i}{nr}
	\end{equation}
	where ${\rm OPT}^i$ is the optimal value of function $f^i(\cdot)$ and $r$ is the number of features.
\end{lem}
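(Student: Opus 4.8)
The plan is to evaluate $\mathbb{E}[w(R)]$ directly by conditioning on the layer selected in Algorithm \ref{a3} and then applying linearity of expectation edge by edge. First I would observe that the target of a Multi-Sampling is drawn uniformly from the $nr$ feature nodes of $V^1\cup\cdots\cup V^r$, so each layer $i$ is chosen with probability $1/r$, and, conditioned on layer $i$, the drawn node is uniform over the $n$ nodes of $V^i$. Consequently, conditioned on layer $i$, the returned set $R$ has exactly the same distribution as a Single-Sampling $R^i$ produced on $G^i$ with rumor set $S_r^i$ (compare Algorithm \ref{a3} with Algorithm \ref{a2}). This lets me write $\mathbb{E}[w(R)]=\frac{1}{r}\sum_{i=1}^{r}\mathbb{E}\big[w(R)\mid\text{layer }i\big]$.

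Next, since $w(R)=\sum_{u^i\in R}d^-(u)$, where $d^-(u)$ is the in-degree of $u$ in $G$ (equal to the in-degree of $u^i$ in the copy $G^i$), linearity of expectation gives $\mathbb{E}[w(R)\mid\text{layer }i]=\sum_{u\in V}d^-(u)\cdot\Pr[u^i\in R\mid\text{layer }i]$. The key substitution is to read $\Pr[u^i\in R\mid\text{layer }i]$ as a single-seed coverage probability: applying Lemma 2 (the unbiasedness of the Single-Sampling estimator) to the singleton $S_p^i=\{u^i\}$ gives $n\cdot\Pr[u^i\in R^i]=f^i(\{u^i\})$, hence $\Pr[u^i\in R\mid\text{layer }i]=f^i(\{u^i\})/n$. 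Rumor nodes $u\in S_r$ drop out automatically, since R-sampling (Algorithm \ref{a1}) never inserts a node of $S_r^i$ into $R$. Combining these steps yields the exact identity
\[
\mathbb{E}[w(R)]=\frac{1}{nr}\sum_{i=1}^{r}\sum_{u\in V}d^-(u)\,f^i(\{u^i\}).
\]

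The final step is to collapse the inner sum to the stated form, and this is where I expect the main obstacle to lie. Using $\sum_{u\in V}d^-(u)=m$, the claimed formula follows exactly once one has $\sum_{u\in V}d^-(u)\,f^i(\{u^i\})=m\cdot{\rm OPT}^i$. However, ${\rm OPT}^i=\max_{|S_p^i|\le k}f^i(S_p^i)$ is an optimum, whereas the identity involves the $d^-$-weighted average of the single-seed values $f^i(\{u^i\})$; in general only the pointwise inequality $f^i(\{u^i\})\le{\rm OPT}^i$ is guaranteed, which gives $\mathbb{E}[w(R)]\le\frac{m\sum_{i=1}^{r}{\rm OPT}^i}{nr}$. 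I would therefore either (i) state and use this upper bound, which is precisely what is required to control the expected running time of generating a Multi-Sampling, or (ii) if the equality is meant literally, interpret ${\rm OPT}^i$ as the in-degree-weighted mean $\frac{1}{m}\sum_{u}d^-(u)\,f^i(\{u^i\})$, under which the reduction above closes exactly. The genuinely new content is the layer decomposition and the translation of edge counts into single-seed coverage through the estimator lemma; everything else is routine bookkeeping.
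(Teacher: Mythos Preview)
Your approach is essentially identical to the paper's: decompose by layer, rewrite $w(R)$ as an edge sum $\sum_{(y^i,z^i)\in E^i}x(\{z^i\},R)$ (equivalently your node sum weighted by $d^-$), apply the single-seed unbiasedness Lemma~2 to obtain $f^i(\{z^i\})$, and then bound each such term by ${\rm OPT}^i$. Your observation that only the inequality $\mathbb{E}[w(R)]\le\frac{m\sum_{i=1}^{r}{\rm OPT}^i}{nr}$ can be justified is exactly right, and in fact the paper's own proof concludes with ``$\le$'' rather than ``$=$'', so the stated equality is a misprint in the lemma; the upper bound is all that is needed (and all that is used) for the running-time analysis.
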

\begin{proof}
	We denote by $\mathcal{H}(v^i)$ the collection of all possible Multi-Samplings for a feature node $v^i$. For any Multi-Sampling $R\in \mathcal{H}(v^i)$, we have 
	\begin{flalign}
	&\mathbb{E}[w(R)]=\frac{\sum_{i=1}^{r}\sum_{v^i\in V^i}\sum_{R\in\mathcal{H}(v^i)}\Pr[R]\cdot w(R)}{nr}\nonumber\\
	&=\frac{\sum_{i=1}^{r}\sum_{v^i\in V^i}\sum_{R\in\mathcal{H}(v^i)}\Pr[R]\cdot\sum_{(y^i,z^i)\in E^i}x(\{z^i\},R)}{nr}\nonumber\\
	&=\frac{\sum_{(y^i,z^i)\in E^i}\sum_{i=1}^{r}\sum_{v^i\in V^i}\sum_{R\in\mathcal{H}(v^i)}\Pr[R]\cdot x(\{z^i\},R)}{nr}\nonumber\\
	&=\frac{\sum_{(y^i,z^i)\in E^i}\sum_{i=1}^{r}f^i(\{z^i\})}{nr}\nonumber\\
	&\leq\frac{m\cdot\sum_{i=1}^{r}{\rm OPT}^i}{nr}\nonumber
	\end{flalign}
	The Lemma is proved.
\end{proof}
\begin{lem}
	Algorithm \ref{a4} runs in $O(r\cdot\sum_{R\in\mathcal{R}}|R|)$ time.
\end{lem}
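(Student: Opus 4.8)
The plan is to exhibit a concrete implementation of Algorithm \ref{a4} and bound the cost of each of its phases. The greedy loop is, in effect, a weighted maximum‑coverage procedure in which the elements to be covered are the Multi‑Samplings $R_1,\dots,R_\theta$, each carrying weight $w^{\ell(j)}/\theta$, where $\ell(j)$ denotes the layer in which $R_j$ was generated, and a user $u\in V\setminus S_r$ ``covers'' $R_j$ precisely when its copy $u^{\ell(j)}$ lies in $R_j$ (since $x(S_p,R_j)=x(S_p^{\ell(j)},R_j)$). So the first step is to build, in a single pass over $\mathcal{R}$, an inverted index storing for every feature node $v^i\in V(G')$ the list of Multi‑Samplings that contain it; this pass touches each feature‑node occurrence once and hence costs $O(\sum_{R\in\mathcal{R}}|R|)$. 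In the same pass we initialize, for every user $u$, a marginal‑gain value $mg(u)=\tfrac1\theta\sum_{i=1}^{r}w^i\,|\{j:u^i\in R_j\}|$, again at cost $O(\sum_{R\in\mathcal{R}}|R|)$.

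Next I would analyze the $k$ greedy rounds. In each round we pick the user $u^*$ of maximum $mg$, append it to $S_p^*$, and repair the marginal gains: for each layer $i\in\{1,\dots,r\}$ we scan the inverted list of $u^{*i}$ and, for every Multi‑Sampling $R_j$ in it not yet flagged covered, we flag it and, walking its $|R_j|$ feature nodes $v^i\in R_j$, decrement $mg(v)$ by $w^i/\theta$. The key accounting observation is that a Multi‑Sampling is flagged covered the first time it is scanned (because $u^{*i}\in R_j$ already forces $R_j$ to be covered once $u^*$ is added to $S_p^*$), so the expensive ``walk its feature nodes'' work is performed at most once per Multi‑Sampling over the whole execution and totals $O(\sum_{R\in\mathcal{R}}|R|)$, while every later encounter of an already‑covered Multi‑Sampling is dismissed in $O(1)$ time. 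Because each selected user owns $r$ feature copies, one per layer, the per‑round bookkeeping — consulting the $r$ inverted lists of $u^*$ and, if the $\arg\max$ is done by a linear scan over the at‑most‑$n$ candidates carrying $r$‑term weighted sums — contributes the extra factor $r$, giving the stated bound $O\!\big(r\cdot\sum_{R\in\mathcal{R}}|R|\big)$ after absorbing lower‑order terms.

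The step I expect to be the main obstacle is making this amortization watertight: one must argue carefully that no Multi‑Sampling's interior is traversed more than a constant number of times (i.e.\ that the ``covered'' flags are set early enough) and that the residual $O(r)$ overhead per round, rather than per Multi‑Sampling, is genuinely absorbed into $r\cdot\sum_{R\in\mathcal{R}}|R|$ and not into a stray additive $O(rkn)$ term. A clean way to close this gap is to maintain $mg(\cdot)$ in a priority queue, so the $\arg\max$ costs $O(\log n)$ with $O(1)$ amortized updates, and to observe that in every regime of interest $\theta=|\mathcal{R}|$, and hence $\sum_{R\in\mathcal{R}}|R|$, is large enough that the selection overhead is lower order; alternatively one simply reports the bound as the cost of the coverage computation, which is the dominant term, exactly as in the single‑cascade analysis of Tong et al.\ \cite{tong2017efficient} from which this lemma is adapted.
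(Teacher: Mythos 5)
Your proposal is correct and is, in substance, the same argument the paper intends: the paper's entire proof is the single sentence that the running time ``can be derived directly from Equation (10),'' i.e.\ from the cost of evaluating and maintaining the weighted coverage $W_{\mathcal{R}}(\cdot)$, and your inverted-index construction with the once-per-Multi-Sampling amortization is exactly the standard way of making that one-liner rigorous. Your closing caveat about the additive per-round selection overhead (the $O(kn)$ or $O(k\log n)$ term that the stated bound silently absorbs) is a point the paper does not address at all, so you are if anything more careful than the original.
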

\begin{proof}
The running time of Algorithm \ref{a4} can be derived directly from Eqaution (10).
\end{proof}

Shown as above, the total number of Multi-Samplings generated in Algorithm \ref{a5} is $(|\mathcal{R}|+|\mathcal{R'}|)$. We denote by $i'$ the ending iteration of the for-loop, we have $|\mathcal{R}|=\lambda'/x_{i'}$ and $|\mathcal{R'}|=\lambda^*/{\rm LB}$ where $x_{i'}\leq {\rm LB}\leq {\rm OPT}$. The expected number of Multi-Samplings generated in Algorithm \ref{a5} can be expressed as $\mathbb{E}[|\mathcal{R}|]=O((\lambda'+\lambda^*)/{\rm OPT})$, thus
\begin{equation}
\mathbb{E}[|\mathcal{R}|]=O\left(\frac{(nr)(k+\ell)\log n}{{\rm OPT}\cdot\varepsilon^2}\right)
\end{equation}
From above, we can know that the expected time of generating all Multi-Samplings in Algorithm \ref{a5} is $\mathbb{E}[\sum_{R\in\mathcal{R}}w(R)]$. Based on Theorem 3 in \cite{tang2015influence}, another property of martingale \cite{williams1991probability}, we have $\mathbb{E}[\sum_{R\in\mathcal{R}}w(R)]=\mathbb{E}[|\mathcal{R}|]\cdot\mathbb{E}[w(R)]$.
Thus, 
\begin{equation}
\mathbb{E}[\sum_{R\in\mathcal{R}}w(R)]=O((k+\ell)m\log n/\varepsilon^2)
\end{equation}
due to the fact that $\sum_{i=1}^{r}{\rm OPT}^i=O({\rm OPT})$. Besides, $\mathbb{E}[\sum_{R\in\mathcal{R}}|R|]\leq \mathbb{E}[\sum_{R\in\mathcal{R}}w(R)]$ because $|R|\leq w(R)$ for any $R\in\mathcal{R}$. Thus, the total running time is $O((k+\ell)mr\log n/\varepsilon^2)$. Then, we have the following theorem:
\begin{thm}
	Algorithm \ref{a6} can be ganranteed to return a $(1-1/e-\varepsilon)$-approximate solution of MFRB problem with at least $1-1/n^\ell$ probability, and runs in $O((k+\ell)mr\log n/\varepsilon^2)$ expected time.
\end{thm}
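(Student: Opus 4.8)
The plan is to assemble Theorem~5 from the pieces already established: Theorem~3 supplies the approximation guarantee \emph{provided} the working collection of Multi-Samplings has size at least $\max\{\theta_{1},\theta_{2}\}$ (which follows from $\theta^{*}=\lambda^{*}/{\rm OPT}$, as verified in the text just before Algorithm~5), Theorem~4 certifies that Algorithm~5 returns a collection that large with high probability, and Lemmas~9--10 together with the expected-size bound $\mathbb{E}[|\mathcal{R}|]=O((nr)(k+\ell)\log n/({\rm OPT}\,\varepsilon^{2}))$ control the expected work. So the argument is mostly bookkeeping; the only delicate points are the probabilistic union bound and the check that the collection $\mathcal{R}'$ handed to NodeSelection is ``clean'' enough for Theorem~3 to apply.

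First I would prove correctness. Algorithm~6 runs Algorithm~5, which returns the \emph{fresh} collection $\mathcal{R}'$; once the threshold $\theta=\lambda^{*}/{\rm LB}$ computed in the for-loop is fixed, the members of $\mathcal{R}'$ are i.i.d.\ Multi-Samplings with $|\mathcal{R}'|\geq\theta$, so the dependence that afflicts the for-loop collection $\mathcal{R}$ never touches $\mathcal{R}'$ --- this is exactly why a new collection is generated. By Theorem~4 (with $\delta_{3}=1/(n^{\ell}\log_{2}(nr))$), with probability at least $1-1/n^{\ell}$ the returned collection satisfies $|\mathcal{R}'|\geq\theta^{*}\geq\max\{\theta_{1},\theta_{2}\}$ (this rests on ${\rm LB}\leq{\rm OPT}$, which Lemma~8 guarantees on the same event). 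Conditioned on this, Theorem~3 (instantiated with $\delta_{1}=\delta_{2}=1/(2n^{\ell})$ and $\varepsilon_{1}=\varepsilon_{2}=\varepsilon/(2-1/e)$) says NodeSelection$(\mathcal{R}',k)$ returns a $(1-1/e-\varepsilon)$-approximate solution of MFRB with conditional probability at least $1-1/n^{\ell}$. A union bound over the two failure events leaves success probability at least $1-2/n^{\ell}$; absorbing the factor $2$ into the exponent via the standard replacement $\ell\mapsto\ell+\log_{n}2$ (harmless, since $\ell$ is an input parameter) recovers the claimed $1-1/n^{\ell}$.

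Next I would bound the expected running time. The cost of Algorithm~6 is dominated by Algorithm~5, whose cost splits between generating all Multi-Samplings and the NodeSelection calls. For the sampling: the expected time to build one Multi-Sampling $R$ by the breadth-first R-sampling equals $\mathbb{E}[w(R)]$; since the number of samples is a stopping time with respect to the filtration generated by the Multi-Samplings, the Wald-type martingale identity from \cite{tang2015influence} gives $\mathbb{E}[\sum_{R}w(R)]=\mathbb{E}[|\mathcal{R}|]\cdot\mathbb{E}[w(R)]$, and feeding in the expected-size bound, Lemma~9, and $\sum_{i=1}^{r}{\rm OPT}^{i}=O({\rm OPT})$ collapses it to $O((k+\ell)m\log n/\varepsilon^{2})$. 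For the selection: Lemma~10 makes one NodeSelection call cost $O(r\sum_{R\in\mathcal{R}}|R|)$, and $|R|\leq w(R)$ gives $\sum_{R}|R|\leq\sum_{R}w(R)$; because the guesses $x_{i}=nr\cdot2^{-i}$ halve geometrically along the for-loop, the combined NodeSelection work (including all intermediate calls and the final one in Algorithm~6) is within a constant factor of the last call, i.e.\ $O(r(k+\ell)m\log n/\varepsilon^{2})$ in expectation. Summing the two contributions yields the stated $O((k+\ell)mr\log n/\varepsilon^{2})$.

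I expect the main obstacle to be keeping the probabilistic accounting honest rather than any single calculation: the for-loop of Algorithm~5 produces \emph{dependent} samples, so Theorem~3 must be invoked only on the independent collection $\mathcal{R}'$, with Theorem~4 --- which itself leans on the martingale Lemmas~7--8 and Chen's refinement \cite{2018arXiv180809363C} --- serving as the bridge guaranteeing $|\mathcal{R}'|$ is large enough. A secondary care point is justifying $\mathbb{E}[\sum_{R}w(R)]=\mathbb{E}[|\mathcal{R}|]\,\mathbb{E}[w(R)]$, which is legitimate only because $|\mathcal{R}|$ is a genuine stopping time; once that is granted, the remainder is substitution of bounds already in hand.
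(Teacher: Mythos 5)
Your proposal is correct and follows essentially the same route as the paper: regenerate a fresh collection $\mathcal{R}'$ so that Theorem~3 can be applied to (conditionally) i.i.d.\ samples, invoke Theorem~4 for $|\mathcal{R}'|\geq\theta^{*}\geq\max\{\theta_{1},\theta_{2}\}$, and obtain the expected running time from the Wald-type martingale identity together with Lemmas~9--10 and the bound on $\mathbb{E}[|\mathcal{R}|]$. If anything you are more explicit than the paper, which simply cites Chen's note \cite{2018arXiv180809363C} for the combined $1-1/n^{\ell}$ bound, whereas you carry out the union bound (getting $1-2/n^{\ell}$) and absorb the factor of $2$ by rescaling $\ell$.
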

\begin{proof}
	In \cite{2018arXiv180809363C}, Chen pointed out a direct combination of Theorem 3 and Theorem 4 is problematic. For Theorem 3, it is correct given a fixed value of $\theta$, which means that these $\theta$ Multi-Samplings are sampled from the same sample space. Theorem 4 is based on the satisfaction of Theorem 3, and it uses the same base sample from the probability space. In section 2.4 of \cite{2018arXiv180809363C}, Chen proved its correctness of that and provided us with two solutions in section 2.5. Here, we choose the first solution for our MFRB problem: regenerating a new collection of Multi-Samplings. In line 18 of Algorithm \ref{a5}, after determining the size of $\theta$, we regenerate a new collection of Multi-Samplings with the length of $\theta$, from line 19 to line 22 of Algorithm \ref{a5}, and feed it into Algorithm \ref{a4} to get the final result. In section 3.1 of \cite{2018arXiv180809363C}, Chen proved that it is bounded with at least $1-1/n^\ell$ probability, which answered the question mentioned above why we need to generate a new collection of Multi-Samplings.
\end{proof}

\section{Experiment}
In this section, we will show the effectiveness and efficiency of our proposed algorithms on three real social networks. Our goal is to evaluate Algorithm 5 and Algorithm 6 with some common used baseline algorithms.

\subsection{Dataset description and Statistics}
Our experiments are relied on the datasets from networkrepository.com \cite{nr}, an online network repository.
There are three datasets used in our experiments: (1) Dataset-1: A co-authorship network, where each edge is a co-authorship among scientists to publish papers in the area of network theory. (2) A Wiki network, which is a who-voteson-whom network collected from Wikipedia. (3) Dataset-3: an Advogato online social network, which is a social community platform. Users can express weighted trust relationships among themselves explicitly. These datasets contain a list of all of the user-to-user links. Basic statistics of these datasets are summarized in Table 1. However, according to the multi-layer structure of MF-model, the number of feature nodes is dfferent from these basic information. Thus, the actual number of nodes and edges is determined by the number of features. we will describe in detail later.

\begin{figure}[!t]
	\centering
	\includegraphics[width=3.5in]{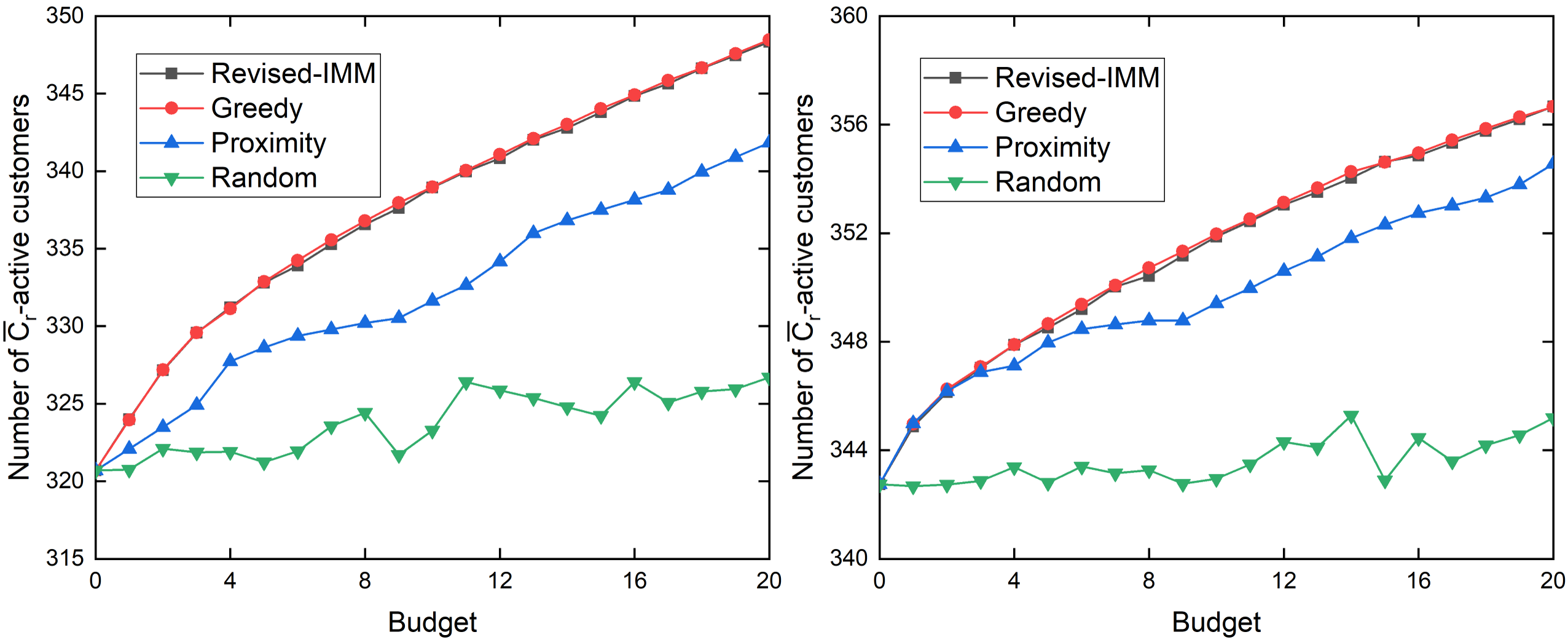}
	\\(a) sub-case: two features
	\includegraphics[width=3.5in]{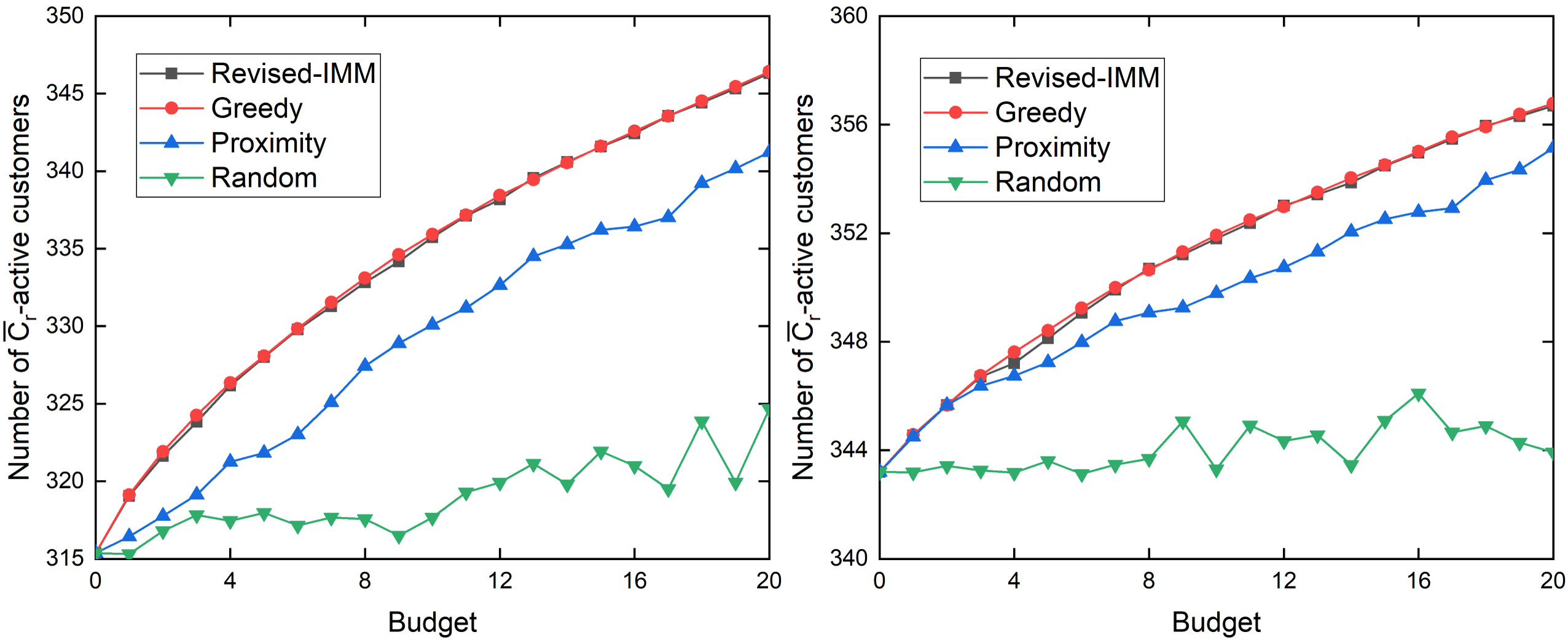}
	\\(b) sub-case: three features
	\includegraphics[width=3.5in]{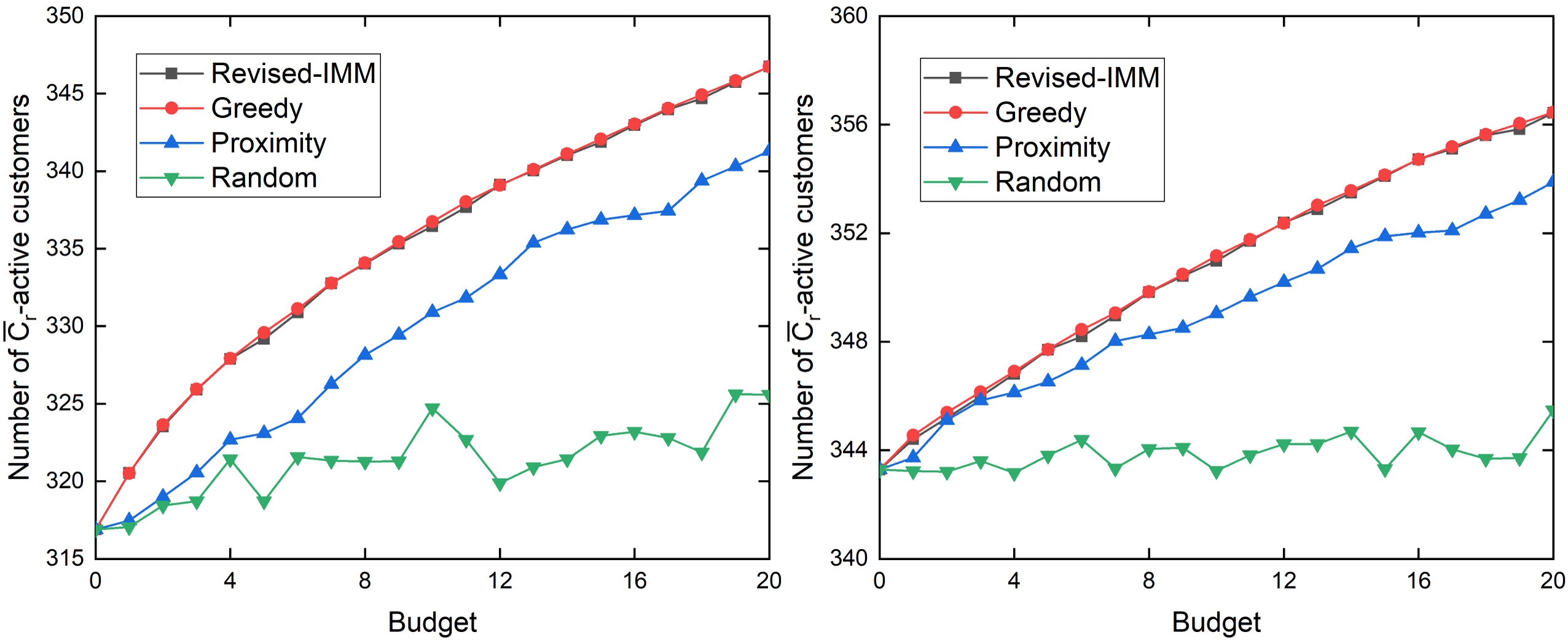}
	\\(c) sub-case: four features
	\caption{The performance comparison achieved by different algorithms with budget 20 in dataset-1. The left column is under the CP-model, and right column is under WC-model.}
	\label{fig2}
\end{figure}

\begin{figure}[!t]
	\centering
	\includegraphics[width=3.5in]{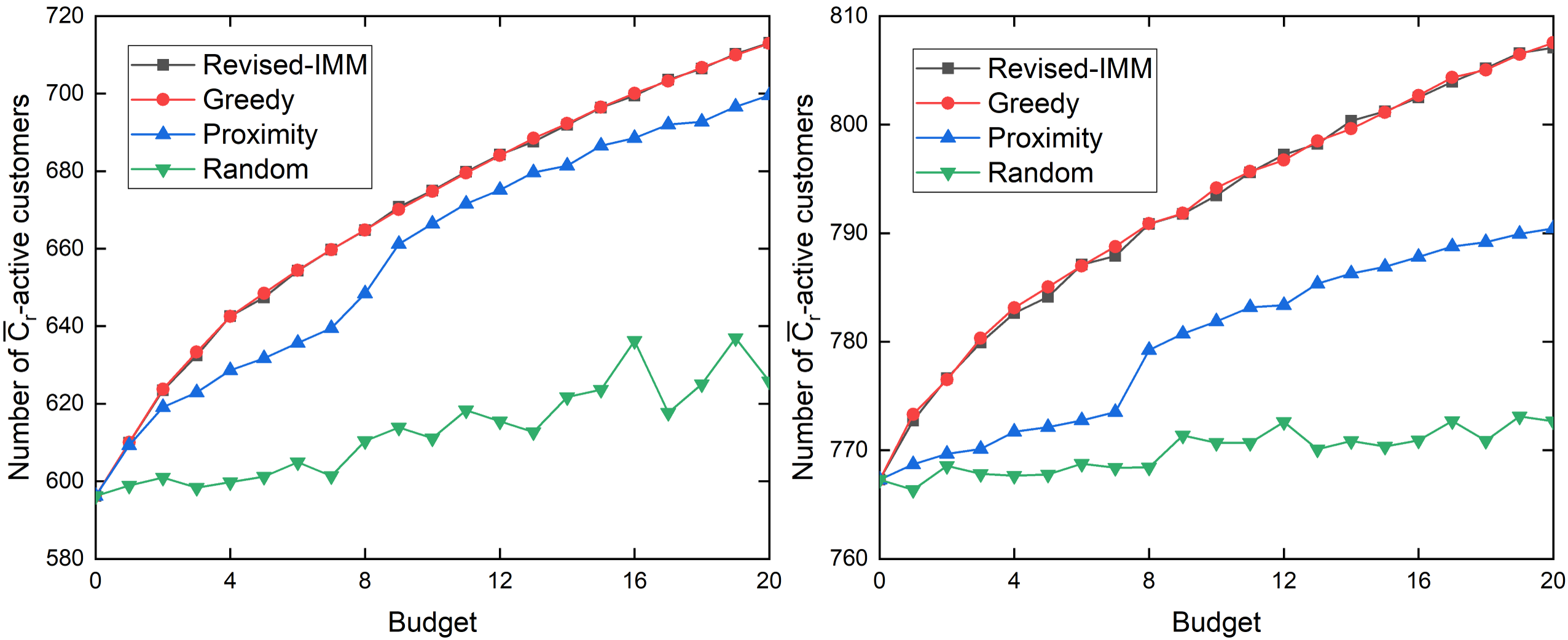}
	\\(a) sub-case: two features
	\includegraphics[width=3.5in]{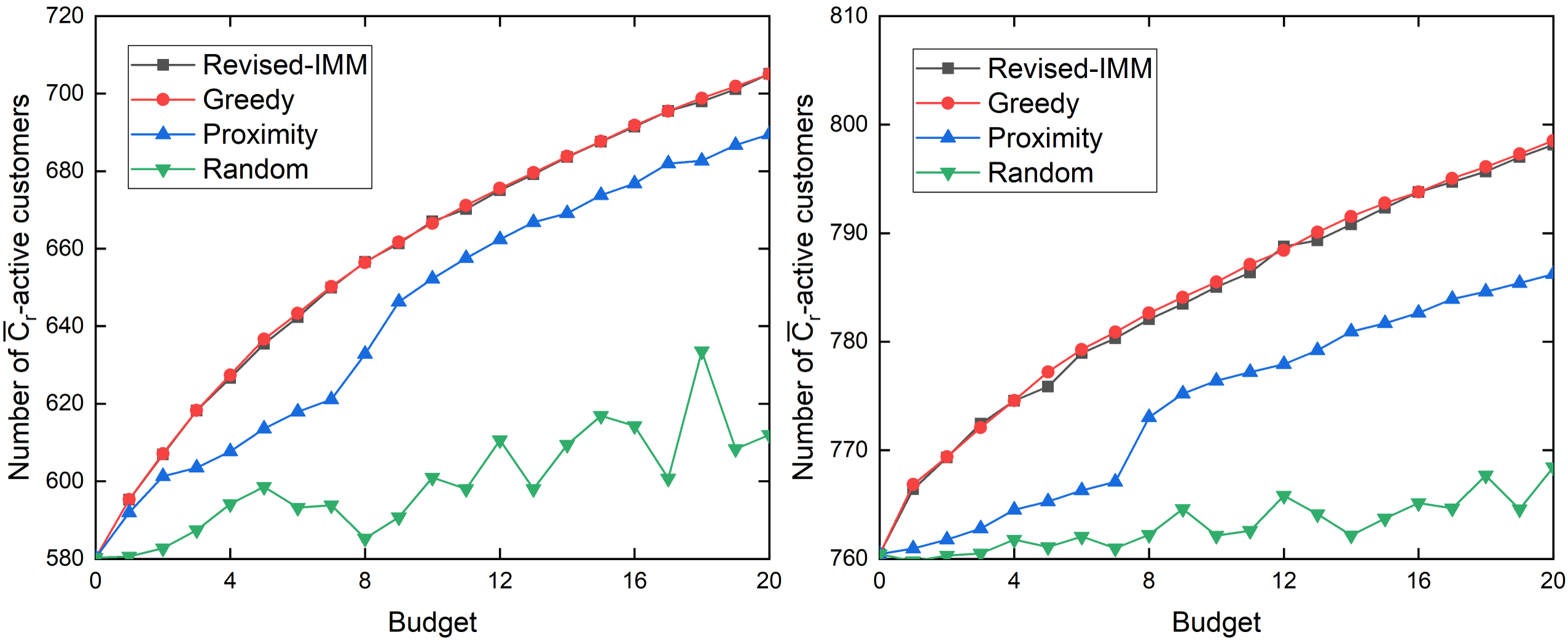}
	\\(b) sub-case: three features
	\includegraphics[width=3.5in]{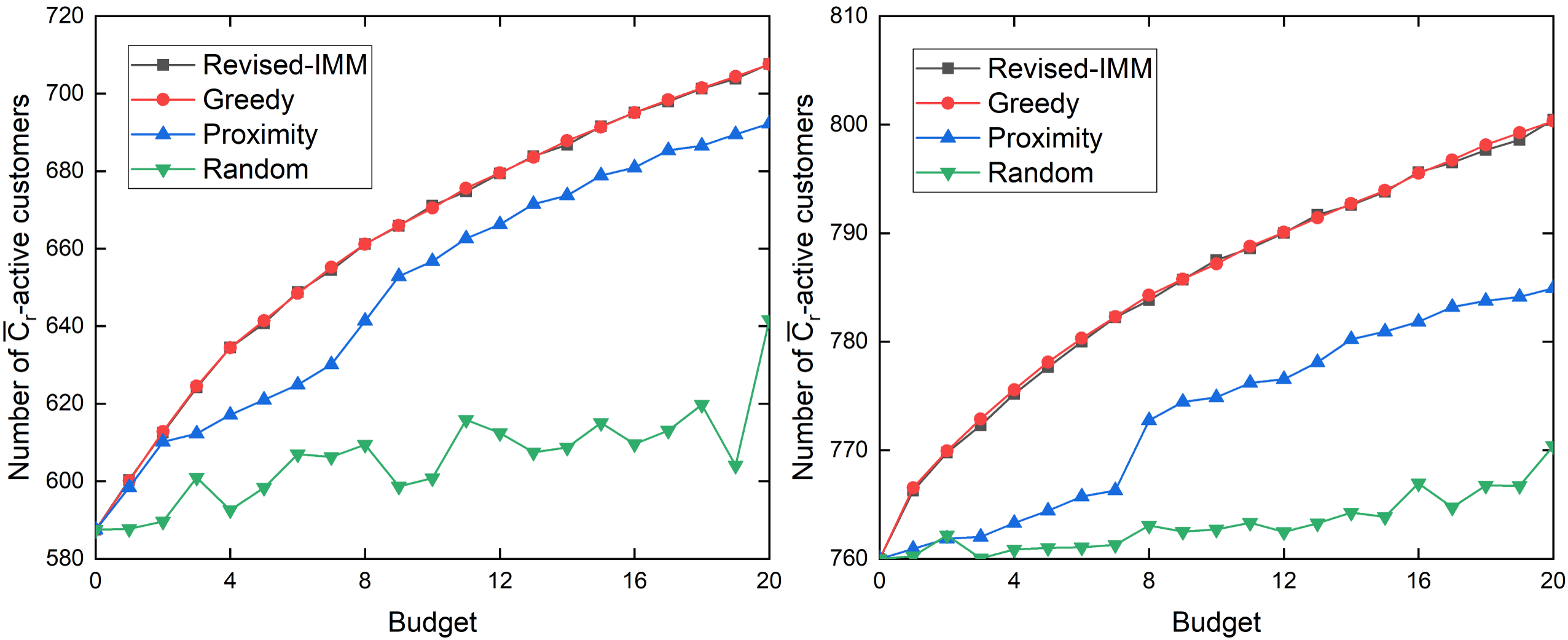}
	\\(c) sub-case: four features
	\caption{The performance comparison achieved by different algorithms with budget 20 in dataset-2. The left column is under the CP-model, and right column is under WC-model.}
	\label{fig3}
\end{figure}

\begin{table}[h]
	\renewcommand{\arraystretch}{1.3}
	\caption{The statistics of three datasets}
	\label{table_example}
	\centering
	\begin{tabular}{|c|c|c|c|c|}
		\hline
		\bfseries Dataset & \bfseries n & \bfseries m & \bfseries Type & \bfseries Average degree\\
		\hline
		dataset-1 & 0.4K & 1.01K & directed & 4\\
		\hline
		dataset-2 & 1.0K & 3.15K & directed & 6\\
		\hline
		dataset-3 & 6.5K & 51.3K & directed & 18\\
		\hline
	\end{tabular}
\end{table}

\subsection{Experimental Setup}
The experiment is based on MF-model, thus, the probability on the edges is either set as a constant or for each edge $e=(u,v)$, we set $p_e=1/|N^-(v)|$. This setting  is widely used in prior works \cite{tang2014influence} \cite{goyal2011celf++} \cite{jung2011irie}. We call these two setting as constant probability model (CP-model) and weighted cascade model (WC-model). Then, we compare our proposed algorithms with some common baseline algorithm, which is shown as follows:
\begin{itemize}
	\item Revised-IMM: This is the algorithm proposed in this paper, unless otherwise specified, we set $\varepsilon=0.1$ and $\ell=1$ by default.
	\item Greedy: At each step, it selects a node such that adding this node to current seed set can obtain the maximum marginal gain. It is implemented by Monte-Carlo simulation. We set the number of simulations to $num=2000$. It is only tested on small networks, dataset-1 and dataset-2, because of its low efficiency.
	\item Proximity: It selects the outgoing neighbors of the nodes in rumor set according to the out-degree of these outgoing neighbors. We select these neighbors with high out-degree in priority.
	\item Random: This is a classical baseline algorithm, where the nodes in positive set are selected randomly.
\end{itemize}

In our experiment, the users in rumor seed set $S_r$ are the nodes with the highest outgoing degree in original graph $G$ and the size $|S_r|=20$. Because the $S_r$ is partially $C_r$-active, only part of features of those users in $S_r$ are $C_r$-accepted, thus, we set the probability that the corresponding feature nodes of $S_r$ accept rumor cascade is $80\%$. The number of users in positive set $S_p$ is from $1$ to $20$, and $S_p$ is fully $C_p$-active, so the corresponding feature nodes of $S_p$ are all $C_p$-accepted. 

Next, we evaluate the performance of Revised-IMM algorithm. It can be divided into three sub-cases: (a) Assuming for each product, there are two features 1 and 2, the corresponding graph $G'$ has two layers, one is feature 1 and the other is feature 2. For CP-model, the diffusion probability for feature 1 is $p^1=0.4$ and feature 2 is $p^2=0.5$. The weight for feature 1 is $w^1=0.3$ and feature 2 is $w^2=0.7$. The actual number of nodes and edges will be doubled. (b) Assuming for each product, there are three features 1, 2 and 3, thus,  $G'$ has three layers for each feature. For CP-model, the diffusion probability for feature 1 is $p^1=0.4$, feature 2 is $p^2=0.5$ and feature 3 is $p^3=0.6$. The weight for feature 1 is $w^1=0.3$, feature 2 is $w^2=0.3$ and feature 3 is $w^3=0.4$. The actual number of nodes and edges will be tripled. (c) Assuming for each product, there are four features 1, 2, 3 and 4, thus, $G'$ has four layers for each feature. For CP-model, the diffusion probability for feature 1 is $p^1=0.4$, feature 2 is $p^2=0.5$, feature 3 is $p^3=0.5$ and feature 4 is $p^4=0.6$. The weight for feature 1 is $w^1=0.2$, feature 2 is $w^2=0.3$, feature 3 is $w^3=0.4$ and feature 4 is $w^4=0.1$. The actual number of nodes and edges will be quadrupled.

\begin{figure}[!t]
	\centering
	\includegraphics[width=3.5in]{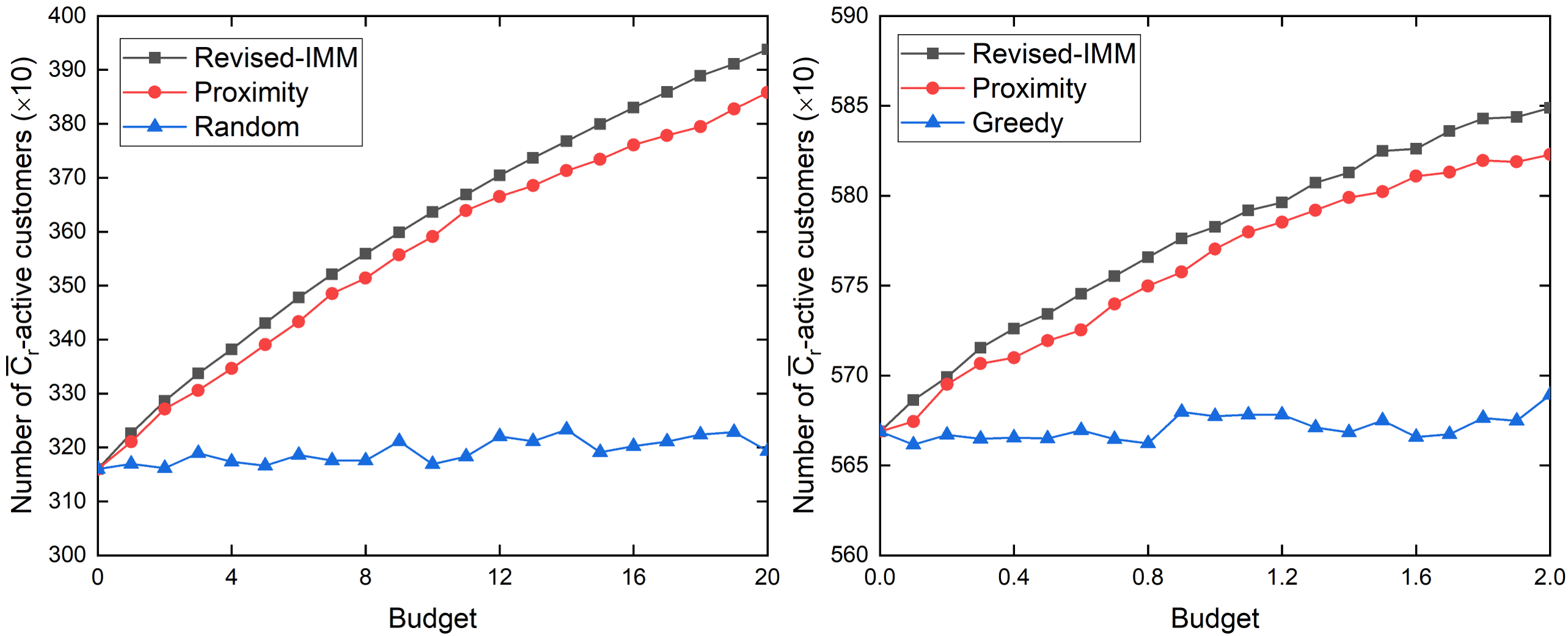}
	\\(a) sub-case: two features
	\includegraphics[width=3.5in]{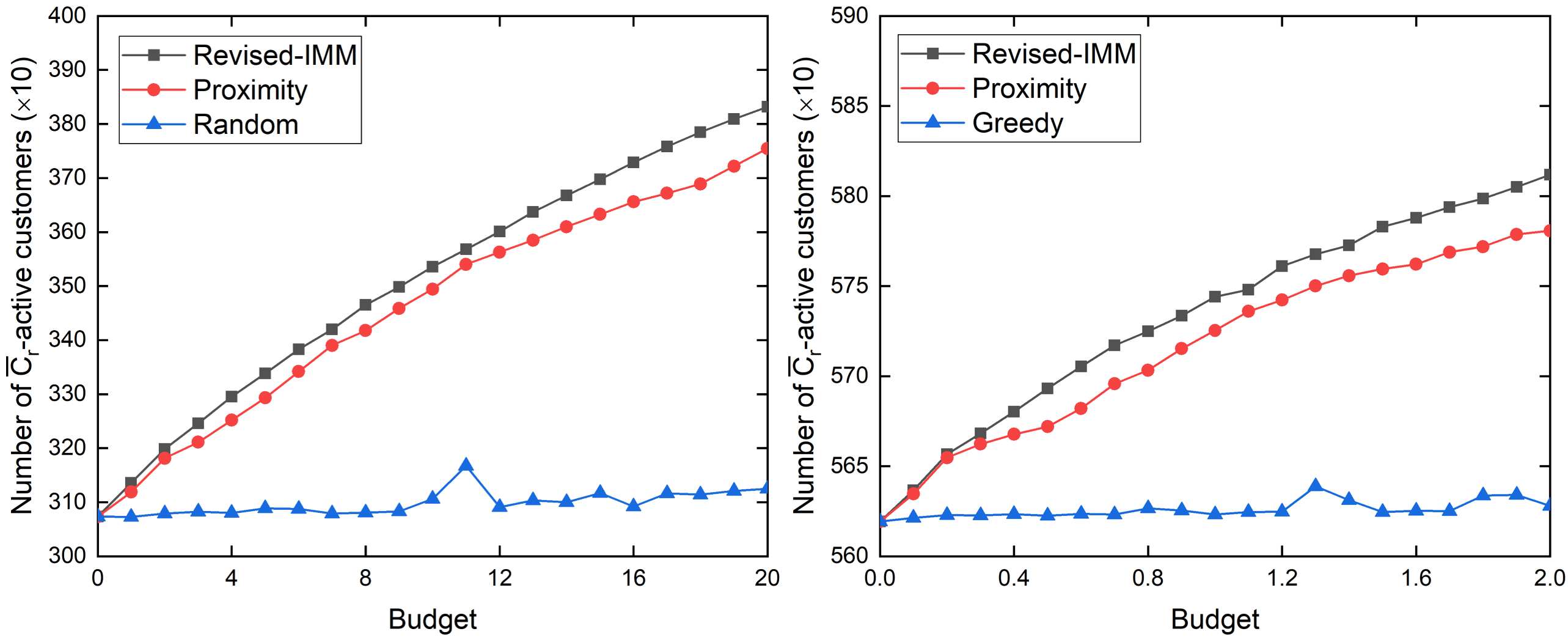}
	\\(b) sub-case: three features
	\caption{The performance comparison achieved by different algorithms with budget 20 in dataset-3. The left column is under the CP-model, and right column is under WC-model.}
	\label{fig4}
\end{figure}

\begin{table}[h]
	\renewcommand{\arraystretch}{1.3}
	\caption{The running time under the PC-model when $k=20$}
	\label{t2}
	\centering
	\begin{tabular}{|c|c|c|c|c|}
		\hline
		\multicolumn{5}{|c|}{Dataset-1}\\
		\hline
		\bfseries & \bfseries Revised-IMM & \bfseries Greedy & \bfseries Proximity & \bfseries Random\\
		\hline
		(a) & 16.67s & 1.67h & 0.74s & 0.89s\\
		\hline
		(b) & 14.02s & 2.78h & 1.80s & 1.91s\\
		\hline
		(c) & 30.42s & 3.77h & 2.06s & 2.33s\\
		\hline
		\multicolumn{5}{|c|}{Dataset-2}\\
		\hline
		(a) & 148.24s & 19.66h & 4.01s & 3.69s\\
		\hline
		(b) & 167.48s & 32.17h & 6.35s & 6.50s\\
		\hline
		(c) & 193.87s & 41.42h & 8.59s & 8.77s\\
		\hline
		\multicolumn{5}{|c|}{Dataset-3}\\
		\hline
		(a) & 14min & n/a & 2min & 1min\\
		\hline
		(b) & 20min & n/a & 2min & 1min\\
		\hline
	\end{tabular}
\end{table}

\subsection{Experimental results}
Figure \ref{fig2} and Figure \ref{fig3} draw the performance comparison achieved by different algorithms with budget 20 under the dataset-1 and dataset-2. Obviously, we can see that Revised-IMM algorithms and Greedy algorithm have the the same performance with respect to objective function $f(\cdot)$. However, to computational cost, Revised-IMM algorithm is much more efficient that Greedy algorithm. The running time in these experiment is shown as Table \ref{t2}. For example, under the CP-model with $k=20$, we consider dataset-2 with 4 features, Revised-IMM consumes $193.87$ seconds but Greedy takes about $41.42$ hours.

Figure \ref{fig4} draws the performance comparison achieved by different algorithms with budget 20 under the dataset-3. It verifies the scalability of Revised-IMM algorithm. Figure \ref{fig5} draws the number of Multi-Samplings generated by Algorithm \ref{a5} with different budgets. We can see that this is in line with our expectation, the number of Multi-Samplings increases as the budget increases. Figure \ref{fig6} draws the average relative error between the estimated value and objective value from budget 1 to 20. Here, given positive seed set $S_p$ and a collection of Multi-Samplings, the estimated value is $nr\cdot W_{\mathcal{R}}(S_p)$ and objective value is $f(S_p)$, which is implemented by Monte-Carlo simulation with $num=2000$. Thus, the relative error is $|f(S_p)-nr\cdot W_{\mathcal{R}}(S_p)|/f(S_p)$. For example, under the setting: dataset-1, $k=20$, CP-model and 2 features, estimated value is $348.38$ and objective value is $348.35$, we have relative error is $0.01\%$. Therefore, it satisfies what Theorem 2 said, $nr\cdot W_{\mathcal{R}}(S_p)$ is an unbiased estimator to $f(S_p)$.

\begin{figure}[!t]
	\centering
	\includegraphics[width=3.5in]{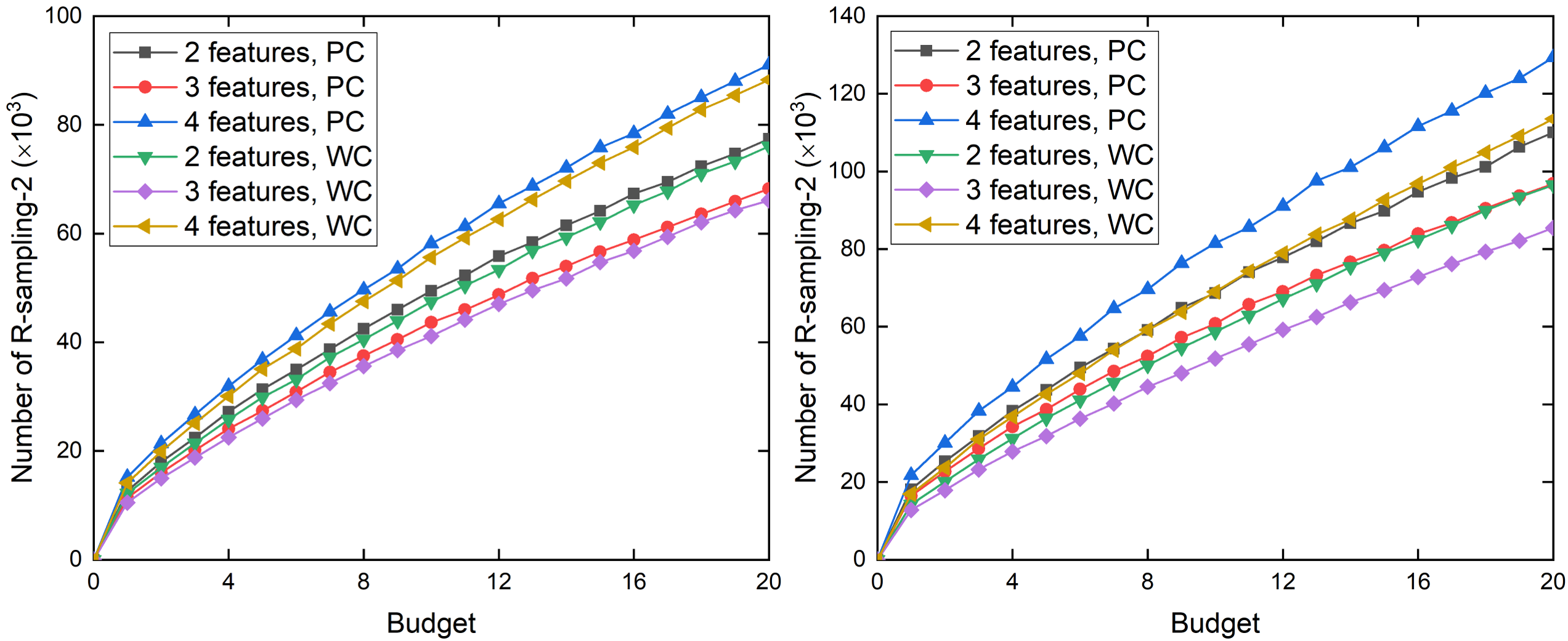}
	\caption{The number of Multi-Samplings generated by Algorithm \ref{a5} in Revised-IMM algorithm. The left column is under the dataset-1, and right column is under the dataset-2.}
	\label{fig5}
\end{figure}
\begin{figure}[!t]
	\centering
	\includegraphics[width=3.5in]{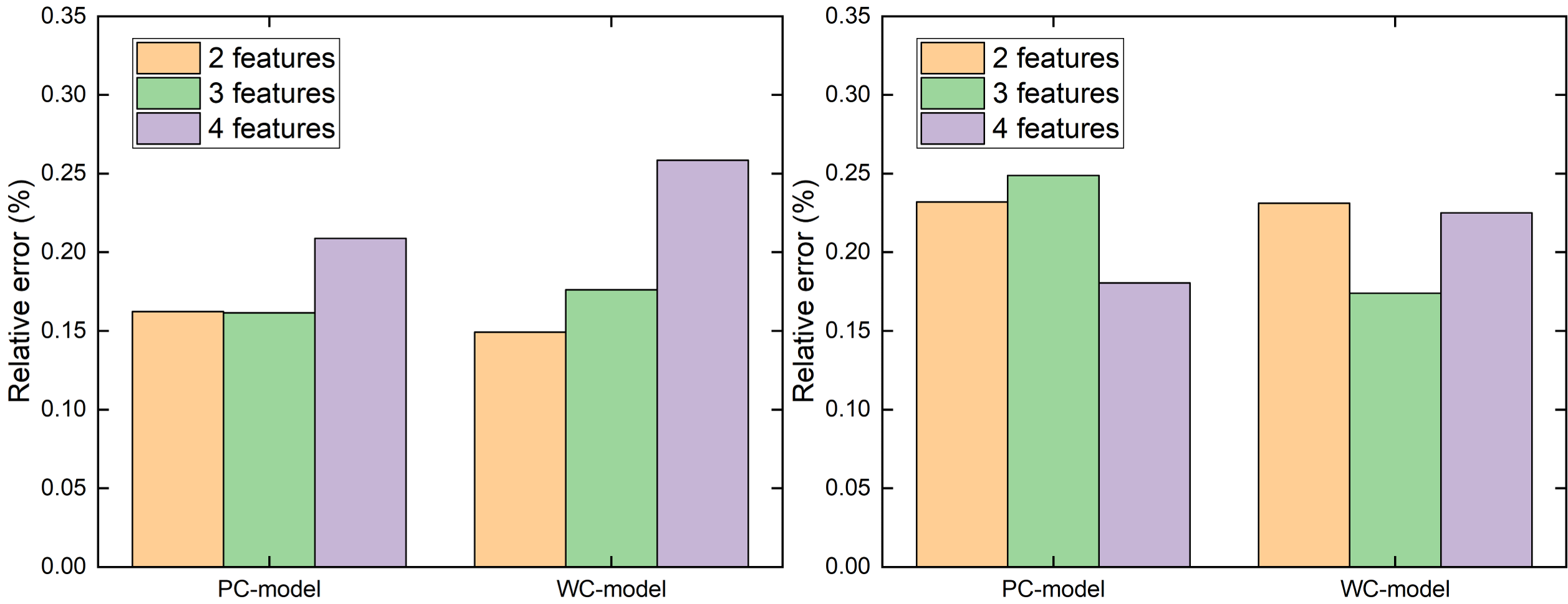}
	\caption{The avergae relative error between the estimated value and objective value from budget 1 to 20. Here, the left column is under the dataset-1, and right column is under the dataset-2.}
	\label{fig6}
\end{figure}

\section{Conclusion}
In this paper, we propose a novel multi-feature diffusion model, MF-model, to simulate real scenario in which multiple features can be propagated independently in social networks. Based on MF-model, MFRB problem is formulated as a monotone non-decreasing submodular maximization problem. Then, we design a novel sampling technique, Multi-Sampling, which is an unbiased estimator to objective function of MFRB. Inspired by martingale analysis, the Revised-IMM algorithm is proposed, which returns a $(1-1/e-\varepsilon)$-approximation solution and runs in $O((k+\ell)mr\log n/\varepsilon^2)$ expected time. The experimental result verified the effectiveness and correctness of Revised-IMM algorithm.

However, one of the shortcomings of this paper is that the weight for each feature is equal for different users, which is not entirely realistic. Because for different users, the importance of each feature to them is different. For example, some people care more about price, others value the appearance more. In future work, it is worth studying how to solve this more complicated and more realistic situation, which is not submodular, and even not monotone non-decreasing.

% if have a single appendix:
%\appendix[Proof of the Zonklar Equations]
% or
%\appendix  % for no appendix heading
% do not use \section anymore after \appendix, only \section*
% is possibly needed

% use appendices with more than one appendix
% then use \section to start each appendix
% you must declare a \section before using any
% \subsection or using \label (\appendices by itself
% starts a section numbered zero.)
%

\appendix
\subsection{Proof of Lemma 4}
\begin{proof}
	For optimal solution $S_p^\circ$, we have defined $p=f(S_p^\circ)/nr$, thus, $p={\rm OPT}/nr=\mathbb{E}[W_{\mathcal{R}}(S_p^\circ)]$. Then, by Equation (16), we have
	\begin{flalign}
		&\Pr[nr\cdot W_{\mathcal{R}}(S_p^\circ)\leq (1-\varepsilon_1)\cdot {\rm OPT}]\nonumber\\
		&=\Pr[nr\cdot W_{\mathcal{R}}(S_p^\circ)\leq (1-\varepsilon_1)\cdot pnr]\nonumber\\
		&=\Pr[\theta\cdot W_{\mathcal{R}}(S_p^\circ)\leq (1-\varepsilon_1)\cdot p\theta]\nonumber\\
		&=\Pr\bigg[\sum_{j=1}^{\theta}\sum_{i=1}^{r}w^i\cdot x(S_p^i,R_j)-p\theta\leq -\varepsilon_1\cdot p\theta\bigg]\nonumber\\
		&\leq\exp\bigg(-\frac{\varepsilon_1^2}{2\bar{w}}\cdot p\theta\bigg)\nonumber\\
		&\leq\exp\bigg(-\frac{\varepsilon_1^2}{2\bar{w}}\cdot p\theta_1\bigg)\nonumber\\
		&=\delta_1 \nonumber
	\end{flalign}
	Thus, $nr\cdot W_{\mathcal{R}}(S_p^\circ)\geq (1-\varepsilon_1)\cdot {\rm OPT}$ holds with at least $1-\delta_1$ probability. By Lemma 3 and greedy properties, $nr\cdot W_{\mathcal{R}}(S_p^*)\geq (1-1/e)\cdot nr\cdot W_{\mathcal{R}}(S_p^\circ)\geq (1-1/e)(1-\varepsilon_1)\cdot {\rm OPT}$. The Lemma is proved.
\end{proof}
\subsection{Proof of Lemma 5}
\begin{proof}
	For any $k$-size seed set $S_p$, we have defined $p=f(S_p)/nr$, thus, $p=\mathbb{E}[W_{\mathcal{R}}(S_p)]$. Then, by Equation (17) and $\zeta=\varepsilon_2\cdot {\rm OPT}/pnr$, we have
	\begin{flalign}
		&\Pr[nr\cdot W_{\mathcal{R}}(S_p^*)-f(S_p)\geq \varepsilon_2\cdot {\rm OPT}]\nonumber\\
		&=\Pr[nr\cdot W_{\mathcal{R}}(S_p)-pnr\geq \varepsilon_2\cdot {\rm OPT}]\nonumber\\
		&=\Pr[\theta\cdot W_{\mathcal{R}}(S_p)-p\theta\geq\frac{\varepsilon_2\cdot {\rm OPT}}{pnr}\cdot p\theta]\nonumber\\
		&=\Pr\bigg[\sum_{j=1}^{\theta}\sum_{i=1}^{r}w^i\cdot x(S_p^i,R_j)-p\theta\geq \frac{\varepsilon_2\cdot {\rm {\rm OPT}}}{pnr}\cdot p\theta\bigg]\nonumber\\
		&\leq\exp\bigg(-\frac{\zeta^2}{2\bar{w}+\frac{2}{3}\zeta}\cdot p\theta\bigg)\nonumber\\
		&=\exp\bigg(-\frac{\varepsilon_2^2\cdot{{\rm OPT}}^2}{2\bar{w}pn^2r^2+\frac{2}{3}\varepsilon_2nr\cdot {\rm OPT}}\cdot\theta\bigg)\nonumber\\
		&\leq\exp\bigg(-\frac{\varepsilon_2^2\cdot{{\rm OPT}}^2}{2\bar{w}nr\cdot {\rm OPT}+\frac{2}{3}\varepsilon_2nr\cdot {\rm OPT}}\cdot\theta\bigg)\nonumber\\
		&\leq\exp\bigg(-\frac{\varepsilon_2^2\cdot {\rm OPT}}{(2\bar{w}+\frac{2}{3}\varepsilon_2)\cdot nr}\cdot\theta_2\bigg)\nonumber\\
		&=\delta_2/\tbinom{n-n_r}{k} \nonumber
	\end{flalign}
	Because there exists at most $\binom{n-n_r}{k}$ positive size-$k$ seed sets and by union bound, there is at least $1-\delta_2$ probability that no such $S_p^*$ that $nr\cdot W_{\mathcal{R}}(S_p^*)-f(S_p^*)\geq \varepsilon_2\cdot {\rm OPT}$. The Lemma is proved.
\end{proof}
\subsection{Proof of Lemma 6}
\begin{proof}
	For any $k$-size seed set $S_i$, we have defined $p=f(S_i)/nr$, thus, $p=\mathbb{E}[W_{\mathcal{R}}(S_i)]\leq{\rm OPT}/nr<x_i/nr$. Then, by Equation (17) and $\zeta=\frac{(1-\varepsilon')\cdot x_i}{pnr}-1$, we know that $\zeta>\varepsilon'\cdot x_i/(pnr)>\varepsilon'$, and we have
	\begin{flalign}
			&\Pr\left[nr\cdot W_{\mathcal{R}}(S_i)\geq(1+\varepsilon')\cdot x_i\right]\nonumber\\
			&=\Pr\left[\theta_i\cdot W_{\mathcal{R}}(S_i)-p\theta_i\geq\left(\frac{(1-\varepsilon')\cdot x_i}{pnr}-1\right)\cdot p\theta_i\right]\nonumber\\
			&\leq\exp\left(-\frac{\zeta^2}{2\bar{w}+\frac{2}{3}\zeta}\cdot p\theta_i\right)\nonumber\\
			&<\exp\left(-\frac{\varepsilon'^2\cdot x_i/(pnr)}{2\bar{w}+\frac{2}{3}\zeta}\cdot\frac{\left(2\bar{w}+\frac{2}{3}\varepsilon'\right)pnr\left(\log\binom{n-n_r}{k}/\delta_3\right)}{\varepsilon'^{2}\cdot x_i}\right)\nonumber\\
			&<\exp\left(-\log\tbinom{n-n_r}{k}/\delta_3\right)\nonumber\\
			&=\delta_3/\tbinom{n-n_r}{k}\nonumber
	\end{flalign}
	Because there is at least $1-\delta_3$ probability by union bound that no such $S_i$ that $nr\cdot W_{\mathcal{R}}(S_i)\geq(1+\varepsilon')\cdot x_i$. The Lemma is proved.
\end{proof}
\subsection{Proof of Lemma 7}
\begin{proof}
	For any $k$-size seed set $S_i$, we have defined $p=f(S_i)/nr$, thus, $p=\mathbb{E}[W_{\mathcal{R}}(S_i)]\leq{\rm OPT}/nr$. Then, by Equation (17) and $\zeta=\frac{\varepsilon'\cdot{\rm OPT}}{pnr}$, we have
	\begin{flalign}
		&\Pr[{\rm OPT}<nr\cdot W_{\mathcal{R}}(S_i)/(1+\varepsilon')]\nonumber\\
		&=\Pr[nr\cdot  W_{\mathcal{R}}(S_i)-{\rm OPT}\geq\varepsilon'\cdot {\rm OPT}]\nonumber\\
		&<\Pr\left[\theta_i\cdot W_{\mathcal{R}}(S_i)-p\theta_i\geq\frac{\varepsilon'\cdot {\rm OPT}}{pnr}\cdot p\theta_i\right]\nonumber\\
		&\leq\exp\left(-\frac{\zeta^2}{2\bar{w}+\frac{2}{3}\zeta}\cdot p\theta_i\right)\nonumber\\
		&=\exp\left(-\frac{\varepsilon'^2\cdot{\rm OPT}^2}{2\bar{w}pn^2r^2+\frac{2}{3}\varepsilon'nr\cdot {\rm OPT}}\cdot\theta_i\right)\nonumber\\
		&=\exp\left(-\frac{\varepsilon'^2\cdot{\rm OPT}^2}{2\bar{w}pnr\cdot {\rm OPT}+\frac{2}{3}\varepsilon'nr\cdot {\rm OPT}}\cdot\theta_i\right)\nonumber\\
		&\leq\exp\left(-\frac{\varepsilon'^2\cdot{\rm OPT}}{(2\bar{w}+\frac{2}{3}\varepsilon')\cdot nr}\cdot\theta_i\right)\nonumber\\
		&=\delta_3/\tbinom{n-n_r}{k}\nonumber
	\end{flalign}
	Because there is at least $1-\delta_3$ probability by union bound that no such $S_i$ that ${\rm OPT}<nr\cdot W_{\mathcal{R}}(S_i)/(1+\varepsilon')$. The Lemma is proved.
\end{proof}

% you can choose not to have a title for an appendix
% if you want by leaving the argument blank

% use section* for acknowledgment
\section*{Acknowledgment}
This work is partly supported by National Science Foundation under grant 1747818.

% Can use something like this to put references on a page
% by themselves when using endfloat and the captionsoff option.
\ifCLASSOPTIONcaptionsoff
  \newpage
\fi

% trigger a \newpage just before the given reference
% number - used to balance the columns on the last page
% adjust value as needed - may need to be readjusted if
% the document is modified later
%\IEEEtriggeratref{8}
% The "triggered" command can be changed if desired:
%\IEEEtriggercmd{\enlargethispage{-5in}}

% references section

% can use a bibliography generated by BibTeX as a .bbl file
% BibTeX documentation can be easily obtained at:
% http://mirror.ctan.org/biblio/bibtex/contrib/doc/
% The IEEEtran BibTeX style support page is at:
% http://www.michaelshell.org/tex/ieeetran/bibtex/
%\bibliographystyle{IEEEtran}
% argument is your BibTeX string definitions and bibliography database(s)
%\bibliography{IEEEabrv,../bib/paper}
%
% <OR> manually copy in the resultant .bbl file
% set second argument of \begin to the number of references
% (used to reserve space for the reference number labels box)
\bibliographystyle{IEEEtran}
\bibliography{references}

% biography section
% 
% If you have an EPS/PDF photo (graphicx package needed) extra braces are
% needed around the contents of the optional argument to biography to prevent
% the LaTeX parser from getting confused when it sees the complicated
% \includegraphics command within an optional argument. (You could create
% your own custom macro containing the \includegraphics command to make things
% simpler here.)
%\begin{IEEEbiography}[{\includegraphics[width=1in,height=1.25in,clip,keepaspectratio]{mshell}}]{Michael Shell}
% or if you just want to reserve a space for a photo:

% You can push biographies down or up by placing
% a \vfill before or after them. The appropriate
% use of \vfill depends on what kind of text is
% on the last page and whether or not the columns
% are being equalized.

%\vfill

% Can be used to pull up biographies so that the bottom of the last one
% is flush with the other column.
%\enlargethispage{-5in}

% that's all folks
\end{document}